\RequirePackage{fix-cm}

\documentclass{article}

\usepackage{graphicx}
\usepackage[lofdepth, lotdepth]{subfig}
\usepackage{listings}
\usepackage{multirow}
\usepackage{amsthm}

\usepackage{amsmath}

\newtheorem{proposition}{Proposition}[section]

\newtheorem{definition}{Definition}[section]

\newtheorem{corollary}{Corollary}[section]

\title{Objective Bayesian Analysis for the Lomax Distribution}

\author{Paulo Ferreira\thanks{Corresponding author. Email:
    paulohenri@ufba.br}, Jhon Gonzales, Vera Tomazella, \\ Ricardo Ehlers,
  Francisco Louzada, Eveliny Silva
}

\date{February 2016}

\begin{document}

\maketitle

\begin{abstract}
In this paper we propose to make Bayesian inferences for the
parameters of the Lomax distribution using non-informative priors, 
namely the Jeffreys prior and the reference prior. We assess Bayesian
estimation through a Monte Carlo study with 500 simulated data 
sets. To evaluate the possible impact of prior specification on
estimation, two criteria were considered: the bias and square root of
the mean square error. The developed procedures are illustrated on a
real data set.\\

\noindent Keywords: Bayesian inference, Jeffreys prior, Lomax
distribution, reference prior.

\end{abstract}

\section{Introduction} \label{intro}
The Lomax distribution \cite{lomax54}, also known as the Pareto Type II distribution (or simply Pareto II), is a heavy-tail probability distribution often used in business, economics and actuarial modeling. It is essentially a Pareto distribution that has been shifted so that its support begins at zero \cite{van2009}. The Lomax distribution has been applied in a variety of contexts ranging from modeling the survival times of patients after a heart transplant \cite{bain92} to the sizes of computer files on servers \cite{holland2006}. Some authors, such as \cite{bryson74}, suggest the use of this distribution as an alternative to the exponential distribution when data are heavy-tailed. 

The main objective of this paper is to make Bayesian inferences for the parameters of the Lomax distribution using non-informative 
priors, namely the Jeffreys prior \cite{jeffreys61} and the reference prior \cite{bernardo79}. Then, we perform a simulation study to compare the efficiency of the Bayesian approach for estimating the model parameters under these two priors, and check for the possible impact of prior specification. We also show how to represent the Lomax distribution in a hierarchical form by augmenting the model with a latent variable which makes the Bayesian computations easier to implement. This would also allow the user to implement inferences using all-purpose Bayesian statistical packages like WinBUGS \cite{lunn2000} or JAGS \cite{plummer2003}.

The remainder of this paper is organized as follows. In Section \ref{lomax}, we present the Lomax distribution and list some of its properties. In Section \ref{priori}, we formulate the Bayesian model using non-informative priors. In Section \ref{simulations}, a simulation study is  presented. In Section \ref{app}, the methodology is illustrated on a real data set. Some final comments are given in Section \ref{conclusions}.

\section{Model definition} \label{lomax}

Here, we use the definition that appears, for example, in \cite{Howlader2002}.

\begin{definition}
A continuous random variable $X$ has a Lomax distribution with parameters $\alpha$ and $\beta$ if its probability density
function is given by $$f(x|\beta,\alpha)=\frac{\alpha}{\beta}\left(1+\frac{x}{\beta}\right)^{-(\alpha+1)}, \quad x\ge0,$$
where $\alpha>0$ and $\beta>0$ are the shape and scale parameters, respectively.
\end{definition}

We refer to this distribution as $Lomax\left(\beta,\alpha\right)$. The median is $\beta(2^{1/\alpha}-1)$ and the mode is zero. The hazard function is given by 
\[
h(x|\beta,\alpha) = \frac{\alpha}{\beta}\left(1+\frac{x}{\beta}\right)^{-1}, \quad x\ge0,
\]
which is a decreasing function of $x$, thus making this a suitable model for components that age with time. The survival function is given by
\[
S(x|\beta,\alpha) = \left(1+\frac{x}{\beta}\right)^{-\alpha}, \quad x\ge0.
\]

We note also that the Lomax distribution can be expressed in the following hierarchical form
\begin{eqnarray*}
X|\beta,\lambda &\sim & Exponential\left(\frac{\lambda}{\beta}\right),\\
\lambda|\alpha  &\sim & Gamma(\alpha,1).
\end{eqnarray*}

This follows from writing the joint density of $X$ and $\lambda$ as
\[
f(x| \beta,\lambda)f(\lambda|\alpha)= \frac{1}{\beta\Gamma(\alpha)} 
\lambda^{\alpha}\exp\left\{-\lambda\left(1+\frac{x}{\beta}\right)\right\}.
			\]
So, the marginal density of $X$ is given by
\begin{eqnarray*}
f(x) &=&  \frac{1}{\beta\Gamma(\alpha)} \int_0^{\infty} \lambda^{\alpha}
\exp\left\{-\lambda\left(1+\frac{x}{\beta}\right)\right\}d\lambda \\
&=& \frac{1}{\beta\Gamma(\alpha)} \Gamma(\alpha+1)\left(1+\frac{x}{\beta}\right)^{-(\alpha+1)}\\
&=& \frac{\alpha}{\beta} \left(1+\frac{x}{\beta}\right)^{-(\alpha+1)},
\end{eqnarray*}
and we can conclude that $X\sim Lomax(\beta,\alpha)$.

Using this mixture representation, it is not difficult to see that the
unconditional mean and variance of $X$ are given by 
\begin{eqnarray*}
E(X) 
&=& 
\beta E[\lambda^{-1}]=\frac{\beta}{\alpha-1}, \quad \alpha>1,\\\\
Var(X) 
&=& 
\beta^2\left\{E\left[\lambda^{-1}\right]^2 + Var\left[\lambda^{-1}\right]\right\}
=
\frac{\alpha\beta^2}{(\alpha-1)^2(\alpha-2)}, \quad \alpha>2,
\end{eqnarray*}
since $\lambda^{-1}\sim IG(\alpha,1)$, where $IG(a,b)$ denotes the Inverse Gamma distribution with parameters
$a$ and $b$, mean $b/(a-1)$, $a>1$, and variance $b^2/(a-1)^2 (a-2)$, $a>2$.

Now, suppose that $\textbf{X}=(X_1,\dots,X_n)$ is a random sample of size $n$ from the Lomax distribution. 
We assume that the mixing parameters $\boldsymbol{\lambda}=(\lambda_1,\dots,\lambda_n)$ are a priori independent. The
complete conditional distribution of $\boldsymbol{\lambda}$ using the hierarchical form is given by
\begin{eqnarray*}
f(\boldsymbol{\lambda}|\textbf{x},\beta,\alpha)
&\propto& 
f(\textbf{x}|\beta,\boldsymbol{\lambda})~f(\boldsymbol{\lambda}|\alpha)\\
&\propto&
\prod_{i=1}^n\lambda_i\exp(-\lambda_ix_i/\beta)
\prod_{i=1}^n\lambda_i^{\alpha-1}\exp(-\lambda_i)\\
&\propto&
\prod_{i=1}^n\lambda_i^{\alpha}
\exp\left\{-\lambda_i\left(1+\frac{x_i}{\beta}\right)\right\},  
\end{eqnarray*}
so that
$$
\lambda_i|\textbf{x},\boldsymbol{\lambda}_{-i},\alpha,\beta\sim
Gamma\left(\alpha+1,1+\frac{x_i}{\beta}\right).
$$
Again, using the hierarchical form, we obtain the complete conditional distributions of $\alpha$ and $\beta$ as
\begin{eqnarray*}
f(\alpha|\textbf{x},\boldsymbol{\lambda},\beta) &\propto& 
f(\boldsymbol{\lambda}|\alpha)~\pi(\beta,\alpha) ~ \propto ~
\left[\Gamma(\alpha)\right]^{-n} \left(\prod_{i=1}^n\lambda_i\right)^{\alpha-1}\pi(\beta,\alpha),\\
f(\beta|\textbf{x},\boldsymbol{\lambda},\alpha)
&\propto& f(\textbf{x}|\beta,\lambda)~\pi(\beta,\alpha) ~ \propto ~
\beta^{-n}\exp\left\{-\frac{1}{\beta}\sum_{i=1}^n\lambda_ix_i\right\}\pi(\beta,\alpha).
\end{eqnarray*}

We note that, using this representation of the Lomax distribution, each observation $X_i$ is associated with one mixing parameter $\lambda_i$, whose posterior mean or median can be used to identify a possible outlier.

\section{Prior specification} \label{priori}

We now complete the Bayesian model by specifying a prior distribution for $\alpha$ and $\beta$. We consider non-informative priors on these parameters and verify the existence of their posterior distribution.

\subsection{Jeffreys prior}

A commonly used objective prior in Bayesian analysis is Jeffreys prior \cite{jeffreys61}, which is defined as
\[   \pi_{\tt J}(\beta,\alpha)\propto |I(\beta,\alpha)|^{1/2}, 
	\]
where $I(\cdot)$ stands for the Fisher information matrix. This is given by
\begin{eqnarray*}
I(\beta,\alpha) = n\left[\begin{array}{cc}
\dfrac{\alpha}{\beta^2(\alpha + 2)} & \quad -\dfrac{1}{\beta(\alpha + 1)} \\\\
-\dfrac{1}{\beta(\alpha + 1)} & \quad \dfrac{1}{\alpha^2}
\end{array}\right],
\end{eqnarray*}
from which we obtain
\begin{eqnarray*} 
\pi_{\tt J}(\beta, \alpha) \propto 
\frac{1}{\beta(\alpha+1) \alpha^{1/2}(\alpha+2)^{1/2}}, \quad \beta, \alpha>0.
\end{eqnarray*}

Considering independence between the parameters, the Jeffreys joint prior for $(\beta,\alpha)$ is given by 
$\pi_{\tt IJ}(\beta,\alpha)\propto\pi(\beta)\pi(\alpha)=1/\beta\alpha$. 

Then, substituting $\pi(\beta,\alpha)$ in the expressions for the complete conditional densities, we obtain
\begin{eqnarray*}
f(\alpha|\textbf{x},\boldsymbol{\lambda},\beta) \propto
\frac{1}{(\alpha+1)\alpha^{1/2}(\alpha+2)^{1/2}\Gamma^n(\alpha)}
\left(\prod_{i=1}^n\lambda_i\right)^{\alpha-1}
\end{eqnarray*}
for the dependent Jeffreys prior and
$$
f(\alpha|\textbf{x},\boldsymbol{\lambda},\beta) \propto
\frac{1}{\alpha\Gamma^n(\alpha)}\left(\prod_{i=1}^n\lambda_i\right)^{\alpha-1}
$$
for the independence case. So, the complete conditional distribution of $\alpha$ is not of standard form and a Metropolis-Hastings algorithm \cite{Hastings70,Chib95} is used to sample its values. The complete conditional density of $\beta$ is given by
$$
f(\beta|\textbf{x},\boldsymbol{\lambda},\alpha)
\propto
\beta^{-(n+1)}\exp\left\{-\frac{1}{\beta}\sum_{i=1}^n\lambda_ix_i\right\}
$$
for both dependent and independent Jeffreys priors. It then follows that
$$
\beta|\textbf{x},\boldsymbol{\lambda},\alpha\sim IG\left(n,\sum_{i=1}^n\lambda_ix_i\right).
$$

\subsection{Reference prior}

Reference priors were first proposed by \cite{bernardo79} and further developed by \cite{berger89,berger92a,berger92b,clarke1997,berger2009a,berger2009b}, among others. The idea is to specify a prior distribution such that, even for moderate sample sizes, the information provided by the data should dominate the prior information. In particular, \cite{berger92a} discuss the construction of a non-informative prior that gives a different treatment for parameters of interest and nuisance parameters. When there are nuisance parameters (which is the case in this paper), one must establish an order parametrization between the interest and nuisance parameters. The following proposition is borrowed from \cite{bernardo05} and adapted to the Lomax case.

\begin{proposition} \label{prep2a}
Let $f\left(\mathbf{x}|\beta,\alpha\right)$, $(\beta,\alpha)\in{\rm \Delta}\times\alpha(\beta)\subseteq$ ${\rm I\!R}\times{\rm I\!R}$
be a probability model. Suppose that the joint posterior distribution of $(\beta,\alpha)$\ is asymptotically normal with
covariance matrix $S(\hat{\beta},\hat{\alpha})=I^{-1}(\hat{\beta},\hat{\alpha})$, where $\hat{\beta}$ and $\hat{\alpha}$ are consistent estimators of $\beta$ and $\alpha$, respectively. Then, if $\beta$ is the parameter of interest and $\alpha$ is the nuisance parameter,
\begin{itemize}
\item [(i)] The conditional reference prior of $\alpha$ given $\beta$ is
\[
\pi\left(\alpha | \beta \right)  \propto 
\left[I_{22}(\beta,\alpha)\right]^{1/2}, \quad \alpha\in\alpha\left(\beta\right).
\]
\item [(ii)] If $\pi\left(\alpha|\beta\right)  $ is not proper, a compact approximation $\left\{\alpha_{i} \left(\beta\right), \mbox{ } i = 1, 2, \dots\right\}$ to $\alpha\left(\beta\right)$ is required and the reference prior of $\alpha$ given $\beta$ is
\[
\pi_{i}\left(\alpha|\beta\right)  =\frac{\left[I_{22}(\beta,\alpha)\right]^{1/2}}%
{\int\limits_{\alpha_{i}\left(\beta\right)}\left[I_{22}(\beta,\alpha)\right]^{1/2}d\alpha},\quad \alpha\in\alpha_{i}\left(
\beta\right).
\]
\item [(iii)] The sequence of priors can be obtained as
\[
\pi_{i}(\beta)\propto\exp\left\{\int\limits_{\alpha_{i}\left(  \beta\right)
}\pi_{i}\left(\alpha|\beta\right)  \log\left[s_{11}^{1/2}\left(
\beta,\alpha\right)  \right]  d\alpha\right\},
\]
where $s_{11}^{1/2}\left(\beta,\alpha\right)= I_{\beta}\left(\beta,\alpha\right)  \mathbf{=}I_{11}-I_{12}I_{22}^{-1}I_{21}$.
\item [(iv)] The reference posterior distribution of $\beta$ given data $\mathbf{x}=(x_1,\dots,x_n)$ is
\[
\pi(\beta|\mathbf{x})\propto
\pi(\beta)\left\{
\int\limits_{\alpha\left(\beta\right)}
\left[\prod\limits_{i=1}^{n}f(x_{i}|\beta,\alpha)\right]
\pi(\alpha|\beta) d\alpha\right\}.
\]
\end{itemize}
\end{proposition}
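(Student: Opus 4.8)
The plan is to derive all four parts from the one-parameter reference prior theorem of \cite{bernardo79}, which asserts that for a regular one-parameter model the reference prior coincides with Jeffreys prior, i.e.\ with the square root of the Fisher information. The ordered Berger--Bernardo algorithm for a two-parameter model then amounts to applying this one-parameter result twice: first to the ``innermost'' (nuisance) parameter $\alpha$ with $\beta$ held fixed, and then to the ``outermost'' (interest) parameter $\beta$ once $\alpha$ has been integrated out. The asymptotic normality hypothesis is what legitimises both applications, since it guarantees that the conditional model in $\alpha$ and the marginal model in $\beta$ are each asymptotically regular with the information content read off from the relevant block of $I(\beta,\alpha)$.

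For part (i), I would fix $\beta$ and regard $f(\mathbf{x}|\beta,\alpha)$ as a one-parameter model in $\alpha$. When $\beta$ is known, the Fisher information for $\alpha$ is precisely the $(2,2)$ entry $I_{22}(\beta,\alpha)$ of the full information matrix, so the one-parameter reference prior theorem gives $\pi(\alpha|\beta)\propto[I_{22}(\beta,\alpha)]^{1/2}$. Part (ii) is the standard device of \cite{berger92a} for a non-integrable conditional prior: take a nested sequence of compact sets $\alpha_i(\beta)\uparrow\alpha(\beta)$, restrict $[I_{22}]^{1/2}$ to each $\alpha_i(\beta)$, and normalize, producing the proper conditional priors $\pi_i(\alpha|\beta)$.

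For part (iii), with the proper conditional prior $\pi_i(\alpha|\beta)$ in hand, I would form the one-parameter marginal model for $\beta$ obtained by integrating the likelihood against $\pi_i(\alpha|\beta)$. The relevant asymptotic precision for $\beta$ in this marginal model is the reciprocal $1/S_{11}(\beta,\alpha)$ of the $(1,1)$ entry of $S=I^{-1}$, which by the $2\times 2$ matrix inversion formula equals the Schur complement $I_{11}-I_{12}I_{22}^{-1}I_{21}$, written $s_{11}^{1/2}(\beta,\alpha)$ in the statement. Since this quantity still depends on the nuisance parameter, the reference algorithm replaces it by its conditional geometric mean over $\pi_i(\alpha|\beta)$, which gives exactly $\pi_i(\beta)\propto\exp\{\int_{\alpha_i(\beta)}\pi_i(\alpha|\beta)\log[s_{11}^{1/2}(\beta,\alpha)]\,d\alpha\}$. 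Part (iv) is then Bayes' theorem applied to the marginal model: the posterior for $\beta$ under $\pi_i$ is proportional to $\pi_i(\beta)$ times the integrated likelihood $\int_{\alpha_i(\beta)}\prod_{i=1}^{n}f(x_i|\beta,\alpha)\,\pi_i(\alpha|\beta)\,d\alpha$, and letting $i\to\infty$ and renormalizing yields the displayed expression for $\pi(\beta|\mathbf{x})$.

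I expect the main obstacle to be the limiting argument underlying parts (iii)--(iv): one must verify that, as the compact sets $\alpha_i(\beta)$ expand to $\alpha(\beta)$, the sequence $\pi_i(\beta)$ (after a fixed but arbitrary normalization) converges, that its limit does not depend on the particular exhausting sequence chosen, and that the associated limiting posterior for $\beta$ is proper. This is precisely the step where the asymptotic normality assumption and the tail behaviour of $I_{22}$ and of $s_{11}^{1/2}$ for the Lomax model must be invoked, as in \cite{bernardo05}; everything else follows mechanically from the one-parameter reference prior theorem and the Schur-complement identity for the inverse of the $2\times 2$ Fisher information matrix.
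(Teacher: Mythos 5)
Your sketch reproduces exactly the standard Berger--Bernardo ordered reference-prior heuristic (conditional Jeffreys prior from $I_{22}$, compact-set normalization, Schur-complement precision $I_{11}-I_{12}I_{22}^{-1}I_{21}=1/s_{11}$ for the marginal model of $\beta$ via the geometric-mean step, then Bayes' theorem and a limit over the exhausting compacts), which is precisely the heuristic justification in Bernardo (2005) that the paper cites in lieu of a proof. So your approach is essentially the same as the paper's; the limiting/uniqueness issue you flag is likewise left unaddressed there, and is a feature of the cited heuristic rather than a defect specific to your argument.
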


\begin{proof}
See a heuristic justification in \cite{bernardo05}.
\end{proof}

\begin{corollary}\label{coro1}
 If the nuisance parameter space $\alpha\left(\beta\right) = \alpha$ is independent of $\beta$, and the functions
$s_{11}^{-1/2}\left(\beta,\alpha\right)  $ and $I_{22}^{1/2}(\beta,\alpha)$
factorize in the form
\[
\left[s_{11}\left(\beta,\alpha\right)\right]^{-1/2} = f_{1}\left(
\beta\right)  g_{1}\left(\alpha\right), \qquad \left[I_{22}%
(\beta,\alpha)\right]^{1/2} = f_{2}\left(\beta\right)  g_{2}\left(\alpha\right),
	\]
then
\[
\pi(\beta)\propto f_{1}\left(\beta\right) \quad \mbox{and} \quad
\pi\left(\alpha|\beta\right)  \propto g_{2}\left(\alpha\right).
	\]
Thus, the reference prior relative to the ordered parametrization $(\beta,\alpha)$ is given by
\[
\pi_{\tt R}(\beta,\alpha)=\pi(\beta)\pi(\alpha|\beta)=f_{1}\left(\beta\right)g_{2}(\alpha).
	\]
\end{corollary}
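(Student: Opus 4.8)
The corollary is a direct specialization of Proposition~\ref{prep2a} to the situation in which the relevant Fisher-information quantities separate multiplicatively, so the plan is simply to substitute the two assumed factorizations into parts (ii) and (iii) and to track how the $\beta$- and $\alpha$-dependence splits apart. Because the corollary assumes the nuisance space $\alpha(\beta)=\alpha$ does not depend on $\beta$, I would first fix, once and for all, a nested sequence of compact sets $\alpha_1\subset\alpha_2\subset\cdots$ with $\bigcup_i\alpha_i=\alpha$, the same for every value of $\beta$. Substituting $[I_{22}(\beta,\alpha)]^{1/2}=f_2(\beta)g_2(\alpha)$ into the formula of part (ii), the factor $f_2(\beta)$ is constant in $\alpha$, so it appears in both the numerator and the integral in the denominator and cancels:
\[
\pi_i(\alpha\mid\beta)=\frac{f_2(\beta)\,g_2(\alpha)}{\int_{\alpha_i}f_2(\beta)\,g_2(\alpha)\,d\alpha}
=\frac{g_2(\alpha)}{\int_{\alpha_i}g_2(\alpha)\,d\alpha}=:\pi_i(\alpha),
\]
which no longer depends on $\beta$. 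Reading this up to a constant (equivalently, letting $i\to\infty$) gives $\pi(\alpha\mid\beta)\propto g_2(\alpha)$, the second assertion.

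Next I would insert $\pi_i(\alpha\mid\beta)=\pi_i(\alpha)$ and $s_{11}^{-1/2}(\beta,\alpha)=f_1(\beta)g_1(\alpha)$ into part (iii). Since $\log[s_{11}^{-1/2}(\beta,\alpha)]=\log f_1(\beta)+\log g_1(\alpha)$, the integral in the exponent splits; using that $\pi_i$ is a proper density on $\alpha_i$, so $\int_{\alpha_i}\pi_i(\alpha)\,d\alpha=1$, and that $c_i:=\int_{\alpha_i}\pi_i(\alpha)\log g_1(\alpha)\,d\alpha$ does not involve $\beta$, one obtains
\[
\pi_i(\beta)\propto\exp\!\left\{\log f_1(\beta)\int_{\alpha_i}\pi_i(\alpha)\,d\alpha+c_i\right\}
=e^{c_i}\,f_1(\beta)\propto f_1(\beta).
\]
Letting $i\to\infty$ gives $\pi(\beta)\propto f_1(\beta)$, the first assertion, and then, by the definition of the reference prior for the ordered parametrization $(\beta,\alpha)$ as the product of the marginal of $\beta$ with the conditional of $\alpha$ given $\beta$, $\pi_{\tt R}(\beta,\alpha)=\pi(\beta)\pi(\alpha\mid\beta)=f_1(\beta)g_2(\alpha)$.

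The routine ingredients are the two cancellations above together with the integrability of $\log g_1$ against $\pi_i$ on each $\alpha_i$, which is automatic when $g_1$ and $g_2$ are positive and continuous, since $\alpha_i$ is compact. The step I expect to require the most care is the passage to the improper limit: one should verify that the ratios $\pi_i(\beta,\alpha)/\pi_i(\beta_0,\alpha_0)$ at a fixed reference point converge as $i\to\infty$, so that the $i$-dependent normalizing constants $\int_{\alpha_i}g_2\,d\alpha$ and $e^{c_i}$ can be dropped consistently and the statement $\pi_{\tt R}(\beta,\alpha)\propto f_1(\beta)g_2(\alpha)$ is genuinely well defined. This is precisely the limiting device built into the reference-prior algorithm, and it is the step that the heuristic justification of Proposition~\ref{prep2a} leaves implicit.
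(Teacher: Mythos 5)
Your derivation is correct, but note that the paper does not actually prove this corollary: its ``proof'' is a pointer to Theorem 12 of Bernardo (2005). What you have written is essentially the standard argument behind that cited theorem, carried out from Proposition~\ref{prep2a}: because $\alpha(\beta)=\alpha$ does not depend on $\beta$, the compact sets $\alpha_i$ can be chosen uniformly in $\beta$, the factor $f_2(\beta)$ cancels between numerator and denominator in part (ii) so that $\pi_i(\alpha|\beta)\propto g_2(\alpha)$ is free of $\beta$, and in part (iii) the $\alpha$-integral of $\log g_1(\alpha)$ against the proper density $\pi_i$ is a $\beta$-free constant, leaving $\pi_i(\beta)\propto f_1(\beta)$; the ratios $\pi_i(\beta,\alpha)/\pi_i(\beta_0,\alpha_0)$ are then independent of $i$, so the limiting (improper) reference prior $f_1(\beta)g_2(\alpha)$ is well defined, which is exactly the point the heuristic statement leaves implicit and which you rightly flag. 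One caution: you silently use $\log\bigl[s_{11}^{-1/2}(\beta,\alpha)\bigr]$ in part (iii), whereas the paper's statement of Proposition~\ref{prep2a}(iii) writes $\log\bigl[s_{11}^{1/2}(\beta,\alpha)\bigr]$ while simultaneously identifying that quantity with the Schur complement $I_{11}-I_{12}I_{22}^{-1}I_{21}$; the paper's notation is internally inconsistent there, and your reading (the half-log of the Schur complement, i.e.\ $s_{11}^{-1/2}$ in the standard notation with $S=I^{-1}$) is the one that matches both the corollary's hypotheses and the paper's subsequent application, where $[s_{11}]^{-1/2}\propto\alpha^{1/2}/\bigl(\beta(\alpha+2)^{1/2}(\alpha+1)\bigr)$ yields $f_1(\beta)=1/\beta$ and hence $\pi_{\tt R}(\beta,\alpha)\propto 1/(\alpha\beta)$. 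So your proof supplies a complete, self-contained justification where the paper only defers to the literature; it would be worth stating the notational correction explicitly if this argument were inserted into the text.
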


\begin{proof}
See proof of Theorem 12 in \cite{bernardo05}.
\end{proof}

\begin{proposition} 
The joint reference prior for the Lomax model with parameters $\beta$ and $\alpha$, 
$(\beta,\alpha)\in{\rm \Delta}\times\alpha(\beta)\subseteq$ ${\rm I\!R}\times{\rm I\!R}$, is given by
\begin{eqnarray*} \label{prior_ref}
\pi_{\tt R}(\beta,\alpha) \propto \frac{\pi(\beta)}{\alpha},
\end{eqnarray*}
where $\pi(\beta) \propto f_1(\beta)$.
\end{proposition}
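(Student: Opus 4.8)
The plan is to apply Corollary \ref{coro1} directly, so the task reduces to verifying its two hypotheses for the Lomax model with the ordered parametrization $(\beta,\alpha)$, $\beta$ being the parameter of interest. First I would read off the entries of the Fisher information matrix $I(\beta,\alpha)$ already displayed in the Jeffreys-prior subsection: $I_{11}=n\alpha/[\beta^2(\alpha+2)]$, $I_{12}=I_{21}=-n/[\beta(\alpha+1)]$, and $I_{22}=n/\alpha^2$. From these I would compute the upper-left entry of the inverse information, $s_{11}(\beta,\alpha)=I_{11}-I_{12}I_{22}^{-1}I_{21}$, which is a short algebraic simplification; the key point to record is that after simplification $s_{11}$ has the form $(\text{function of }\beta)\times(\text{function of }\alpha)$, i.e. it factorizes, and likewise $I_{22}^{1/2}=n^{1/2}/\alpha$ trivially factorizes with $f_2(\beta)$ constant and $g_2(\alpha)=1/\alpha$.

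Next I would check the nuisance-parameter-space condition: the range of $\alpha$ is $(0,\infty)$ regardless of $\beta$, so $\alpha(\beta)=\alpha$ is independent of $\beta$ as Corollary \ref{coro1} requires. With the factorizations in hand, Corollary \ref{coro1} gives immediately $\pi(\alpha\mid\beta)\propto g_2(\alpha)=1/\alpha$ and $\pi(\beta)\propto f_1(\beta)$, where $f_1$ is whatever $\beta$-factor emerges from $s_{11}^{-1/2}$. Multiplying, the joint reference prior is $\pi_{\tt R}(\beta,\alpha)=\pi(\beta)\pi(\alpha\mid\beta)\propto \pi(\beta)/\alpha$ with $\pi(\beta)\propto f_1(\beta)$, which is exactly the claimed expression. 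I would note in passing that one does not even need to evaluate the integral in part (iii) of Proposition \ref{prep2a} explicitly, since the conditional prior $\pi(\alpha\mid\beta)$ does not depend on $\beta$ and $\log s_{11}^{1/2}$ splits into a $\beta$-term plus an $\alpha$-term, so the exponential-integral reduces to $f_1(\beta)$ times a constant — this is precisely the content of Corollary \ref{coro1} and is why invoking it is cleaner than working from scratch.

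The main obstacle, such as it is, is the algebraic simplification of $s_{11}=I_{11}-I_{12}^2/I_{22}$ and confirming that it genuinely separates into a product of a $\beta$-function and an $\alpha$-function; if the $\alpha$-dependent factor of $s_{11}^{-1/2}$ turned out to be non-constant that would be fine (it is absorbed into $g_1$, which Corollary \ref{coro1} discards), but one must make sure no cross term in $\beta$ and $\alpha$ survives that would break the factorization. Since $I_{11}$, $I_{12}^2$, and $I_{22}$ each carry the $\beta$-dependence only through an overall $\beta^{-2}$, the ratio $I_{12}^2/I_{22}$ also scales as $\beta^{-2}$, so $s_{11}$ is $\beta^{-2}$ times a pure function of $\alpha$; hence $s_{11}^{-1/2}=\beta\cdot(\text{function of }\alpha)$, giving $f_1(\beta)\propto\beta$ and the factorization is automatic. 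This makes the verification routine and the conclusion $\pi_{\tt R}(\beta,\alpha)\propto\pi(\beta)/\alpha$ with $\pi(\beta)\propto\beta$ follows; I would leave $\pi(\beta)$ written abstractly as $f_1(\beta)$ in the statement exactly as the proposition does, deferring the explicit form to the subsequent discussion.
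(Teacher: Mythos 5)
Your route is the same as the paper's (check that the range of $\alpha$ does not depend on $\beta$, verify the two factorizations, invoke Corollary \ref{coro1}), but there is a genuine error in your identification of $s_{11}$, and it flips the answer for $\pi(\beta)$. In Corollary \ref{coro1}, and in the paper's proof, $s_{11}$ is the $(1,1)$ entry of the asymptotic covariance $S(\beta,\alpha)=I^{-1}(\beta,\alpha)$, namely $s_{11}=\beta^{2}(\alpha+2)(\alpha+1)^{2}/(n\alpha)$; the Schur complement $I_{11}-I_{12}I_{22}^{-1}I_{21}=n\alpha/[\beta^{2}(\alpha+2)(\alpha+1)^{2}]$ is its \emph{reciprocal}, not $s_{11}$ itself (the display in Proposition \ref{prep2a}(iii) equating $s_{11}^{1/2}$ with $I_{11}-I_{12}I_{22}^{-1}I_{21}$ is a misprint, and it seems to have led you astray). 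Because you took $s_{11}$ to be the Schur complement, which scales as $\beta^{-2}$, you concluded $s_{11}^{-1/2}\propto\beta$ and hence $f_{1}(\beta)\propto\beta$. With the correct $s_{11}$ one gets $[s_{11}]^{-1/2}\propto\alpha^{1/2}/[\beta(\alpha+2)^{1/2}(\alpha+1)]$, so $f_{1}(\beta)=1/\beta$, which is exactly what the paper uses right after the proposition to conclude $\pi_{\tt R}(\beta,\alpha)\propto 1/(\alpha\beta)$ and to note the coincidence with the independence Jeffreys prior; your value would give $\pi_{\tt R}\propto\beta/\alpha$ instead, and it is an increasing, non-invariant ``prior'' for a scale parameter, which should have been a red flag.

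The rest of the argument is sound: $I_{22}^{1/2}=n^{1/2}/\alpha$ factorizes with $g_{2}(\alpha)=1/\alpha$, the nuisance space $(0,\infty)$ for $\alpha$ is free of $\beta$, and since the proposition states $\pi(\beta)$ only abstractly as $f_{1}(\beta)$, the displayed form $\pi_{\tt R}\propto\pi(\beta)/\alpha$ does follow from the factorization check alone. But the explicit $f_{1}$ is not a detail you can defer and get wrong: the propriety analysis in Appendix \ref{A} is carried out with the factor $\beta^{-(n+1)}$, i.e.\ with $\pi(\beta)\propto 1/\beta$, so the sign of the exponent matters downstream. Fix the identification (equivalently, take the $+1/2$ power of the Schur complement, since $[s_{11}]^{-1/2}=(I_{11}-I_{12}I_{22}^{-1}I_{21})^{1/2}$) and the computation lands on $f_{1}(\beta)=1/\beta$ as in the paper. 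A smaller slip: $I_{22}=n/\alpha^{2}$ carries no $\beta$-dependence at all, and it is precisely because $I_{22}$ is free of $\beta$ that $I_{12}^{2}/I_{22}$ scales as $\beta^{-2}$; your statement that all three entries scale as $\beta^{-2}$ is internally inconsistent with that conclusion.
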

\begin{proof}
The inverse of the Fisher information matrix is given by
\begin{eqnarray*}
I^{-1}(\beta, \alpha) = S(\beta, \alpha) = \frac{1}{n}\left[\begin{array}{cc}
\dfrac{\beta^2(\alpha + 2)(\alpha + 1)^2}{\alpha} & \quad \beta\alpha(\alpha + 2)(\alpha + 1)\\\\
\beta\alpha(\alpha + 2)(\alpha + 1) & \alpha^2(\alpha+1)^2
\end{array}\right],
\end{eqnarray*}
from which we obtain
\begin{eqnarray*}
\left[s_{11}(\beta,\alpha)\right]^{-1/2} \propto
\left[\frac{\beta^2(\alpha + 2)(\alpha + 1)^2}{\alpha}\right]^{-1/2} = 
\frac{\alpha^{1/2}}{\beta(\alpha + 2)^{1/2}(\alpha + 1)}.
\end{eqnarray*}
Considering Corollary \ref{coro1},
\begin{eqnarray*}
f_1(\beta) = \frac{1}{\beta} \quad \mbox{and} \quad g_1(\alpha) = \frac{\alpha^{1/2}}{(\alpha + 2)^{1/2}(\alpha + 1)}.
\end{eqnarray*}
From the Fisher information matrix, we also have that
\begin{eqnarray*}
\left[I_{22}(\beta, \alpha)\right]^{1/2} = \left[\frac{1}{\alpha^2}\right]^{1/2} = \frac{1}{\alpha},
\end{eqnarray*}
and finally, $f_2(\beta) = 1$ and $g_2(\alpha) = 1/\alpha$.
\end{proof} 

\noindent So, considering $\beta$ as the parameter of interest and $\alpha$ as the nuisance parameter, we conclude that the joint reference prior for $(\beta,\alpha)$ is given by
\[
\pi_{\tt R}(\beta,\alpha)\propto f_1(\beta)g_2(\alpha)=\frac{1}{\alpha\beta}, \quad \beta,\alpha>0.
	\] 
Note that this reference prior coincides with the Jeffreys prior under the assumption of independence of parameters ($\pi_{\tt IJ}(\beta,\alpha)$) in the
Lomax distribution. Consequently, the complete conditional distributions are the same as for the independent Jeffreys prior.

Applying the Bayes theorem, the joint posterior
density is given by
\begin{eqnarray}\label{post}
\pi(\beta,\alpha|\textbf{x})
&\propto &
\left(\frac{\alpha}{\beta}\right)^n 
\prod_{i=1}^{n} \left(1 + \frac{x_i}{\beta}\right)^{-(\alpha+1)}\times\frac{1}{\alpha\beta}\nonumber\\
&\propto &
\alpha^{n-1}  \beta^{-(n+1)}\prod_{i=1}^{n} \left(1 + \frac{x_i}{\beta}\right)^{-(\alpha+1)}.
\end{eqnarray}

\begin{proposition} \label{proper2}
The posterior distribution (\ref{post}) is improper for $n=1$  and proper for $n>1$.
\end{proposition}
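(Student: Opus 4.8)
The plan is to attack the normalising constant of (\ref{post}) directly and decide whether
\[
C_n = \int_0^\infty \int_0^\infty \alpha^{\,n-1}\,\beta^{-(n+1)}\prod_{i=1}^{n}\left(1+\frac{x_i}{\beta}\right)^{-(\alpha+1)} d\alpha\,d\beta
\]
is finite. The first move is to integrate out $\alpha$. Writing $\prod_{i=1}^{n}(1+x_i/\beta)^{-(\alpha+1)}=\exp\{-(\alpha+1)T(\beta)\}$ with $T(\beta)=\sum_{i=1}^{n}\log(1+x_i/\beta)$, which is strictly positive for every $\beta>0$ (assuming, as we may, that the $x_i$ are not all zero), the inner integral is a Gamma integral and equals $\Gamma(n)\,e^{-T(\beta)}\,T(\beta)^{-n}$; for $n=1$ this is just $e^{-T(\beta)}/T(\beta)$. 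Hence $C_n<\infty$ if and only if
\[
J_n = \int_0^\infty \beta^{-(n+1)}\,e^{-T(\beta)}\,T(\beta)^{-n}\,d\beta < \infty,
\qquad e^{-T(\beta)}=\prod_{i=1}^{n}\left(1+\frac{x_i}{\beta}\right)^{-1},
\]
so the two-dimensional propriety question collapses to a one-dimensional one. On any compact subinterval of $(0,\infty)$ the integrand is continuous and bounded, so convergence of $J_n$ is decided entirely by its behaviour at the two ends $\beta\to 0^+$ and $\beta\to\infty$.

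As $\beta\to 0^+$ one has $\log(1+x_i/\beta)\sim\log(1/\beta)$, hence $T(\beta)\sim n\log(1/\beta)\to\infty$, while $e^{-T(\beta)}=\prod_i(1+x_i/\beta)^{-1}\sim\beta^{\,n}/\prod_i x_i$. The integrand is therefore of order $\beta^{-(n+1)}\cdot\beta^{\,n}\cdot(n\log(1/\beta))^{-n}=\left[\beta\,(n\log(1/\beta))^{n}\right]^{-1}$, and the substitution $w=\log(1/\beta)$ turns the contribution near $0$ into a tail $\int^{\infty} w^{-n}\,dw$, which converges exactly when $n>1$ and diverges when $n=1$. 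This already disposes of the case $n=1$: there $J_1=\int_0^\infty \beta^{-2}\left[(1+x_1/\beta)\log(1+x_1/\beta)\right]^{-1}d\beta$, and the change of variable $v=\log(1+x_1/\beta)$ gives $J_1=x_1^{-1}\int_0^\infty v^{-1}\,dv=\infty$, so (\ref{post}) is improper.

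For $n>1$ it remains to control $J_n$ over all of $(0,\infty)$, and the step I expect to be the genuine obstacle is the tail $\beta\to\infty$. There $T(\beta)\to 0$, more precisely $T(\beta)=(\sum_i x_i)/\beta+O(\beta^{-2})$, so $T(\beta)^{-n}$ blows up like $\beta^{\,n}$ and must be weighed against the $\beta^{-(n+1)}$ factor while $e^{-T(\beta)}\to 1$; a crude estimate leaves the integrand decaying only like $\beta^{-1}$, so the delicate point is to decide, via sharp two-sided control of $T(\beta)$ — e.g. $\log(1+x_{\max}/\beta)\le T(\beta)\le n\log(1+x_{\max}/\beta)$ with $x_{\max}=\max_i x_i$, together with a Taylor bound for large $\beta$ — whether the full expression is integrable at infinity. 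A convenient device is the reparametrisation $\gamma=1/\beta$, under which $J_n=\int_0^\infty \gamma^{\,n-1}e^{-\widetilde T(\gamma)}\widetilde T(\gamma)^{-n}\,d\gamma$ with $\widetilde T(\gamma)=\sum_i\log(1+x_i\gamma)$, so that the troublesome tail becomes the more tractable $\gamma\to 0^+$ end; and as an independent check when all the $x_i$ coincide one may integrate out $\beta$ first through $\int_0^\infty\beta^{-(n+1)}(1+x/\beta)^{-n(\alpha+1)}d\beta\propto x^{-n}B(n,n\alpha)$ and then examine the remaining integral in $\alpha$ near $0$ and $\infty$. Assembling the endpoint analysis at $\beta\to 0^+$ with this tail estimate is what yields the dichotomy claimed in the statement.
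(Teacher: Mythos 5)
Your reduction of the problem (integrating $\alpha$ out exactly to get $J_n=\int_0^\infty\beta^{-(n+1)}e^{-T(\beta)}T(\beta)^{-n}\,d\beta$) and your $n=1$ computation are correct, but the proof has a genuine gap exactly at the point you flag and then defer: the behaviour at $\beta\to\infty$ for $n>1$ is never settled, only promised ("assembling the estimates yields the dichotomy"). It does not: that end diverges. Since $\log(1+t)\le t$, you have $T(\beta)\le S/\beta$ with $S=\sum_{i=1}^n x_i$, hence $T(\beta)^{-n}\ge \beta^n S^{-n}$, while $e^{-T(\beta)}=\prod_{i=1}^n(1+x_i/\beta)^{-1}\ge \prod_{i=1}^n(1+x_i)^{-1}=:c>0$ for all $\beta\ge 1$; therefore
\[
J_n \;\ge\; \frac{c}{S^{\,n}}\int_1^\infty \frac{d\beta}{\beta}\;=\;\infty
\qquad\text{for every } n\ge 1 .
\]
So your $\beta^{-1}$ estimate at infinity is not "crude" but sharp (it is a two-sided bound), and the honest completion of your own argument shows the posterior under the prior $1/(\alpha\beta)$ is improper for \emph{all} sample sizes; the claimed propriety for $n>1$ cannot be rescued by any refinement of the tail analysis. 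Your proposed sanity check with all $x_i$ equal says the same thing: it reduces to $\int_0^\infty \alpha^{n-1}\Gamma(n\alpha)/\Gamma(n\alpha+n)\,d\alpha$, whose integrand behaves like $n^{-n}\alpha^{-1}$ as $\alpha\to\infty$.

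For comparison, the paper's own Appendix argument for $n>1$ drops the $i=1$ factor, bounds the remaining double integral \emph{above}, and ends with an upper bound equal to $+\infty$, from which it concludes propriety --- a non sequitur, so it does not establish the statement either. Your route (exact $\alpha$-integration followed by endpoint analysis of a one-dimensional integral in $\beta$) is the right way to decide the question, but carried to its end it refutes the proposition rather than proves it: divergence at $\beta\to 0$ when $n=1$ and at $\beta\to\infty$ for every $n$. The only part of the statement that survives is the impropriety for $n=1$, which you did prove correctly.
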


\begin{proof}
See Appendix \ref{A}.
\end{proof}

\section{Simulation study} \label{simulations}

In this section, we perform a Monte Carlo study to evaluate the methodology described in the previous section. We generated $m = 500$ replications of samples of sizes $n = 50$, $100$, $150$, $200$, $300$ and $500$ from the Lomax distribution, considering parameter values $\beta = 2$ and $\alpha = 1.5$. The model was then estimated using Jeffreys and reference priors. We used the Metropolis-Hastings algorithm and the Gibbs sampler implemented in software R \cite{r2014} to simulate two chains of values from the posterior distribution. A total of $11,000$ iterations with jumps of 10 and a burn-in of $1,000$ were performed for each chain, thus leading to a final sample of $1,000$ values for each chain. Also, the Gelman and Rubin's Monte Carlo Markov Chain (MCMC) convergence diagnostic \cite{gr92} provided in the R package CODA \cite{coda2006} was used to monitor convergence of the two parallel chains. 

Let $\hat{\theta}^{(j)}$ be the estimate of parameter $\theta$ for the $j$-th replication, $j=1,\ldots,m$. These are the parameter posterior means calculated from the $2,000$ simulated values for each replication. To evaluate the estimation method, two criteria were considered: the bias and square root of the mean square error (or simply root mean square error, rmse), which are defined as

\begin{eqnarray*}
\mbox{bias}  &=&\frac{1}{m}\sum_{j=1}^{m}\hat{\theta}^{(j)} - \theta, \\
\mbox{rmse}  &=&\sqrt{\frac{1}{m}\sum_{j=1}^{m}\left ( \hat{\theta}^{(j)} - \theta\right)^2}.
\end{eqnarray*}

The results from the simulated experiment appear in Tables
\ref{tab1}-\ref{tab4} and in Figure \ref{fig1}. Tables \ref{tab1} and
\ref{tab3} show the computed posterior mean, standard deviation (SD),
95\% credibility interval (CI), bias and rmse for each parameter  and
sample size, considering Jeffreys and reference priors,
respectively. We note that the performances are barely similar
whatever the prior we adopt. For the parameter of interest $\beta$ the
bias is relatively small and negative for both priors, while for the
nuisance parameter $\alpha$ the bias becomes positive for sample sizes
$n\ge 100$ when a Jeffreys prior is adopted. With respect to accuracy,
we obtained good results for $\beta$ and $\alpha$ with relatively
small rmse for both Jeffreys and reference priors and moderate sample
sizes. Overall, the biases tend to reduce when moving from a Jeffreys
to a reference prior, however this is mainly observed in the rmse for
sample sizes $n\ge 200$ (see Figure \ref{fig1}). The Gelman and
Rubin's diagnostic values presented in Tables \ref{tab2} and
\ref{tab4} are close to 1, which means that the algorithm converged,
independent of the initial values adopted. Note also from these tables
that the acceptance rates were better (higher) when using the
reference prior. 

\begin{table}[h]
\centering
\caption{Results of the simulation study using Jeffreys prior,
  considering true values of parameters:  $\beta = 2$ and $\alpha =
  1.5$.}\label{tab1}
  \begin{tabular}{lcccccc}
 \hline
sample size     & parameter& mean & SD & 95\% CI  &bias & rmse\\ \hline
$50$ &$\beta$ & 3.5330&	2.2255&	[0.8639 ; 9.2080] &-1.5330&	5.6185\\
     &$\alpha$& 2.1413& 0.9949&	[0.8596 ; 4.6271] &-0.6413&	2.5356\\
$100$&$\beta$ & 2.2017&	0.9074&	[1.0549 ; 4.2964] &-0.7017&	1.1643\\
     &$\alpha$& 1.7865&	0.4875&	[1.0966 ; 2.9148] & 0.2135&  0.5394\\
$150$&$\beta$ & 2.1693&	0.6548&	[1.1611 ; 3.7067] &-0.6693&  0.9377\\
     &$\alpha$& 1.6383&	0.3316&	[1.0897 ; 2.4156] & 0.3616&	0.4912\\
$200$&$\beta$ & 2.2161&	0.6528&	[1.2327 ; 3.7689] &-0.7161&	1.2285\\
     &$\alpha$& 1.9017&	0.4812&	[1.1631 ; 3.0344] & 0.0983&	0.8494\\
$300$&$\beta$ & 2.1263&	0.4984&	[1.3307 ; 3.2698] &-0.6263&	0.9486\\
     &$\alpha$& 1.8401&	0.3697&	[1.2423 ; 2.6795] & 0.1599&	0.6497\\
$500$&$\beta$ & 2.0498&	0.3326&	[1.4456 ; 2.7474] &-0.5498&	0.6428\\
     &$\alpha$& 1.4213&	0.1508&	[1.1496 ; 1.7245] & 0.5787&        0.5980\\ 
\hline
\end{tabular}
\end{table}

\begin{table}[h]
  \centering
  \caption{Evaluation of the algorithm using Jeffreys prior.}	
  \label{tab2}
  \begin{tabular}{lcccccc}
 \hline
sample size  &$50$&	$100$&	$150$	&$200$ &$300$&$500$\\ \hline
acceptance rate &0.4553&  0.3375 &  0.2799  &0.2337  &0.1311 &0.1312\\
Gelman-Rubin    &1.0033&1.0068   &1.0045    &1.0025  &1.0003 &1.0005\\ 
\hline
\end{tabular}
\end{table}

\begin{table}[h]
  \centering
  \caption{Results of the simulation study using reference prior, considering true values of parameters: $\beta = 2$ and $\alpha = 1.5$.}	\label{tab3}
  \begin{tabular}{lcccccc}
 \hline
 sample size &parameter&mean & SD & 95\% CI &	bias&rmse\\ \hline
$50$ &$\beta$ & 3.4805& 1.9893&	[0.8876 ; 8.2304]&	-1.4805& 4.2210\\
       &$\alpha$& 2.0957&	0.8955&	[0.8705 ; 4.2278]&	-0.5957& 1.7641	\\
$100$&$\beta$ & 2.7262&	1.1961&	[1.0864 ; 5.6940]&	-0.7262& 2.5608	\\
       &$\alpha$& 1.8305&	0.5609&	[1.0096 ; 3.1789]&	-0.3306& 1.2756\\
$150$&$\beta$ & 2.3929&	0.8328&	[1.1860 ; 4.4392]&  -0.3929& 1.4407\\
       &$\alpha$& 1.6956&	0.4042&	[1.0791 ; 2.6592]&	-0.1956& 0.7083\\
$200$&$\beta$ & 2.2596&	0.6629&	[1.2539 ; 3.8479]&	-0.2596& 1.0196\\
       &$\alpha$& 1.6162&	0.3159&	[1.1099 ; 2.3456]&	-0.1161& 0.4858\\
$300$&$\beta$ & 2.2290&	0.5285&	[1.3876 ; 3.4468]&	-0.2290& 0.8646\\
       &$\alpha$& 1.6035&	0.2534&	[1.1825 ; 2.1718]&	-0.1035& 0.4137\\
$500$&$\beta$ & 2.0821&	0.3685&	[1.4618 ; 2.8919]&	-0.0821& 0.5267	\\
       &$\alpha$& 1.5378&	0.1795&	[1.2259 ; 1.9224]& -0.0378& 0.2593\\
\hline
\end{tabular}
\end{table}

\begin{table}[h]
 \centering
\caption{Evaluation of algorithm using reference prior.}  \label{tab4}
 \begin{tabular}{lcccccc}
 \hline
sample size  &$50$&	$100$&$150$	&$200$ &$300$&$500$\\ \hline
acceptance rate & 0.9317& 0.9109 & 0.8938 & 0.8785 & 0.8578& 0.8151\\ 
Gelman-Rubin    &1.0021 & 1.0003 &  1.0002& 1.0477 & 1.0008& 1.0019\\ \hline
	\end{tabular}
  \end{table}

\begin{figure}[h]
\centering
\subfloat{\includegraphics[width=0.75\textwidth]{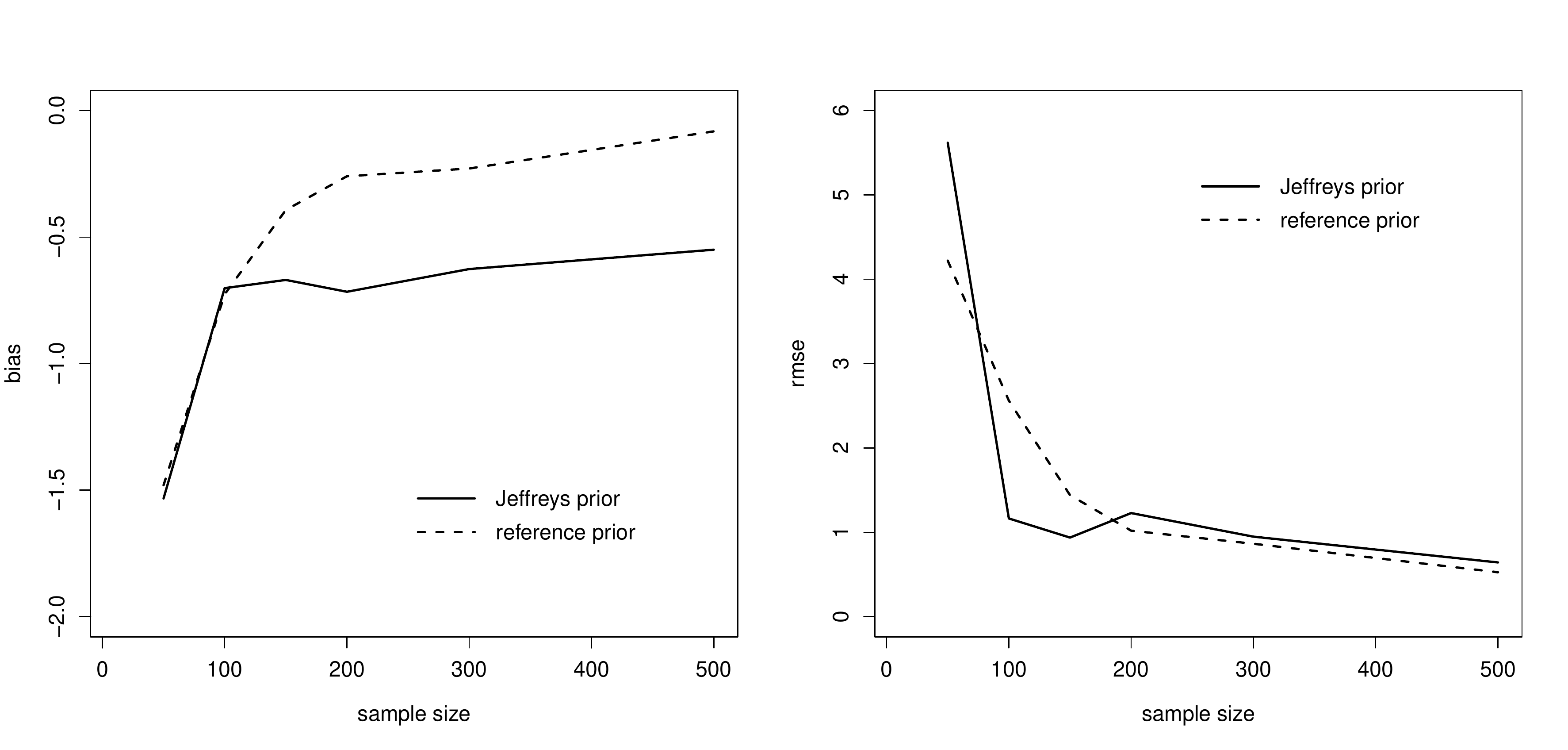}}
\qquad
\subfloat{\includegraphics[width=0.75\textwidth]{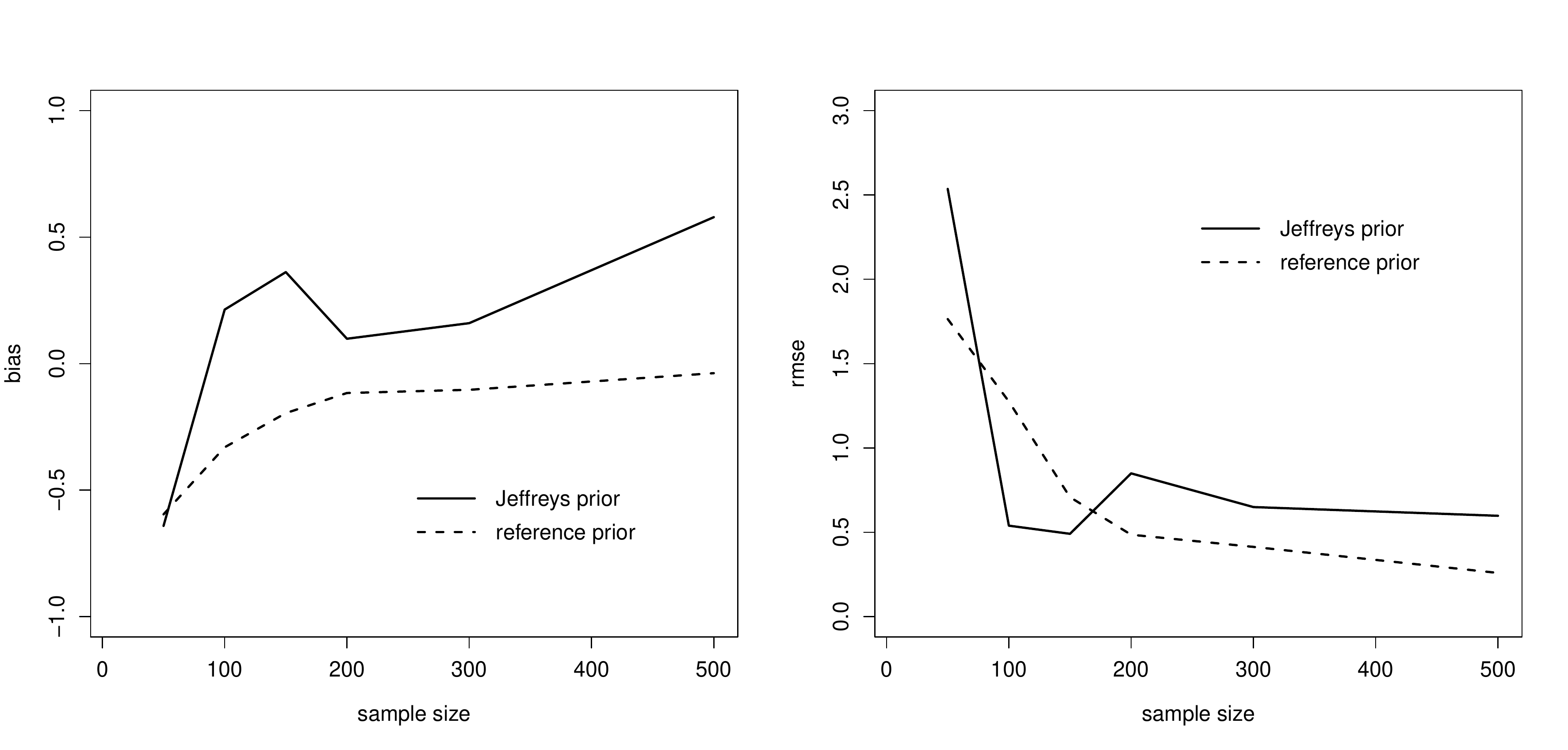}}
\caption{Bias and rmse for $\beta$ (upper panels) and $\alpha$ (lower panels) parameters using Jeffreys and reference priors.}
\label{fig1} 
\end{figure}

\section{Application} \label{app}

In order to illustrate the methodology proposed in this paper, we consider a sample of computer file sizes (in bytes) for all 269 files with the \textit{*.ini} extension on a Windows-based personal computer. These data can be downloaded from the website
{\tt http://web.uvic.ca/~dgiles/downloads/data}. A previous work by \cite{holland2006} has demonstrated the superiority of the Lomax distribution over several other competitors for modeling such file sizes. Those authors also provide technical information suggesting that the distribution should have infinite variance (i.e. $\alpha < 2$) in this context.

The samples for the Jeffreys and reference posterior distributions of the parameters $\beta$ and $\alpha$ were obtained by the Gibbs sampler and Metropolis-Hastings algorithm, i.e. through MCMC methods implemented in software R; see Appendix \ref{B}. The convergence of the chains were tested by using the Gelman and Rubin method implemented in the R package CODA. Graphical traces of those methods and kernel density estimation for each parameter showed that there were no convergence problems. We generated two parallel chains of size $80,000$ for each parameter. The first $20,000$ iterations were ignored to eliminate the effect of the initial values (burn-in) and, to avoid correlation problems, we considered a spacing of size 20, obtaining a sample of size $3,000$.

The posterior results from using both the Jeffreys and reference priors are shown in Table \ref{tab-real1}. It may be noticed that the posterior results are all very similar.

\begin{table}
  \centering
  \caption{Posterior summaries for the Lomax parameters using Jeffreys and reference priors.}
  \label{tab-real1}
  \begin{tabular}{ccccc}\hline  
prior &parameter& mean & SD & 95\% CI\\ \hline
\multirow{2}*{$\pi_{\tt J}$} &$\beta$  & 131.1242 &	24.5318 &	[88.9900 ; 184.6600] \\
&$\alpha$ & 0.5008 & 0.0435 &	[0.4207 ; 0.5920] \\ \hline
\multirow{2}*{$\pi_{\tt R}$} &$\beta$  & 130.4562 &	23.9599 &	[90.0000 ; 182.7000] \\
&$\alpha$ & 0.4986 & 0.0424 &	[0.4226 ; 0.5865] \\
\hline
 \end{tabular}
 \end{table}
	
In Figures \ref{fig1-res} and \ref{fig2-res} we show plots of the generated samples and the empirical marginal posteriors for model parameters $\beta$ and $\alpha$, based on the generated chains of the marginal Jeffreys and reference posteriors, respectively.

\begin{figure}[h]
\centering
\subfloat{\includegraphics[width=0.4\textwidth]{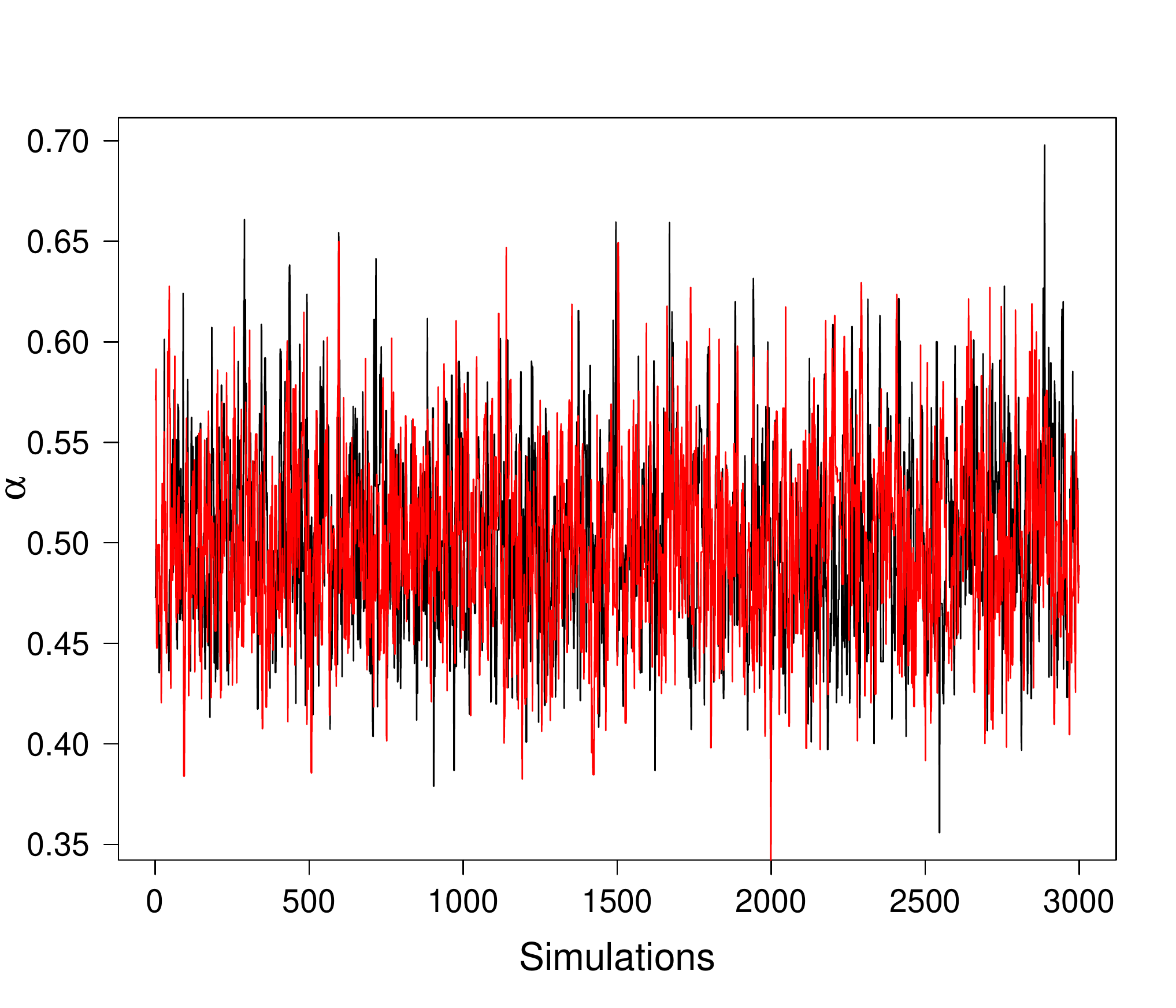}} \quad
\subfloat{\includegraphics[width=0.4\textwidth]{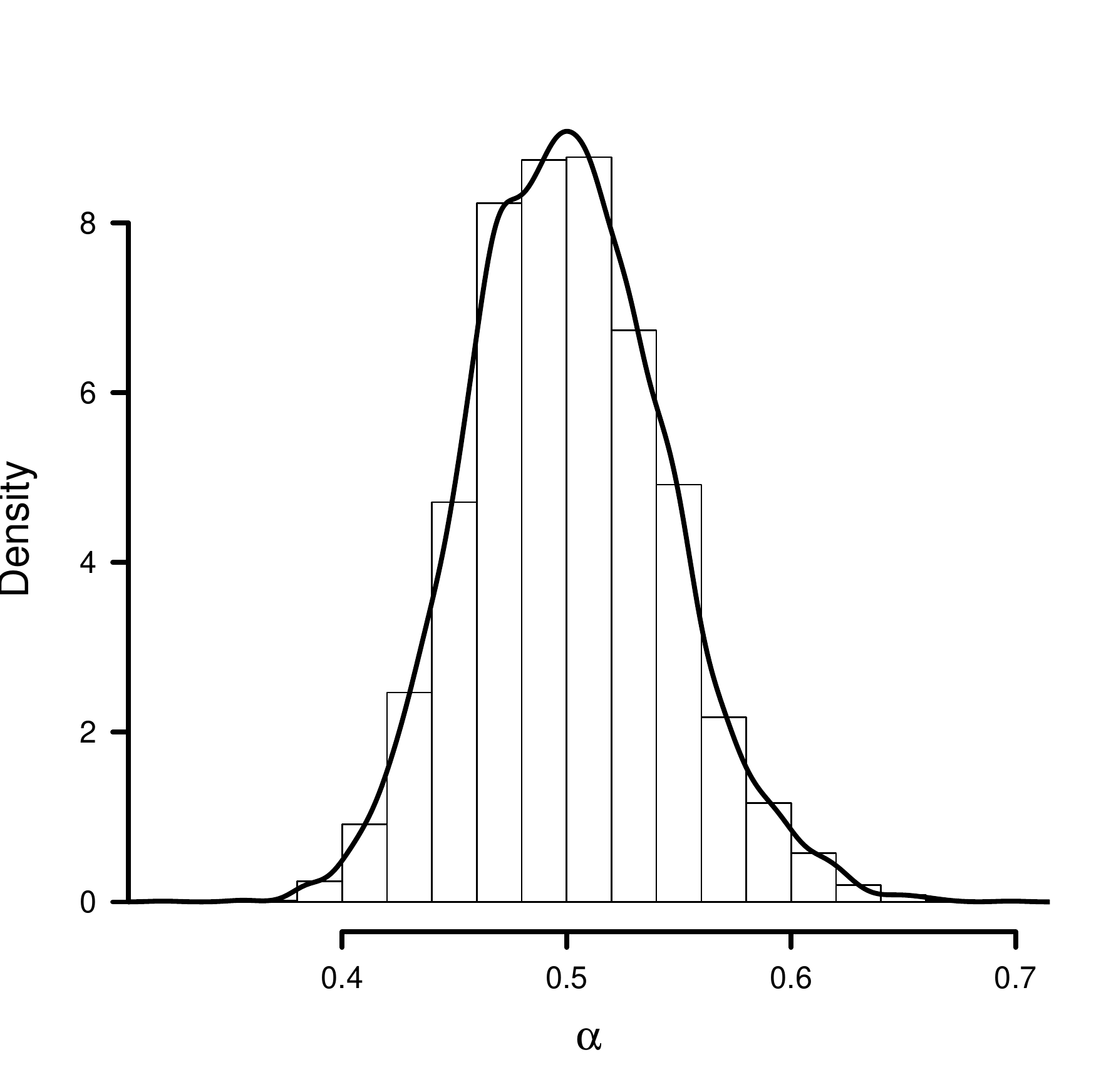}}
\qquad
\subfloat{\includegraphics[width=0.4\textwidth]{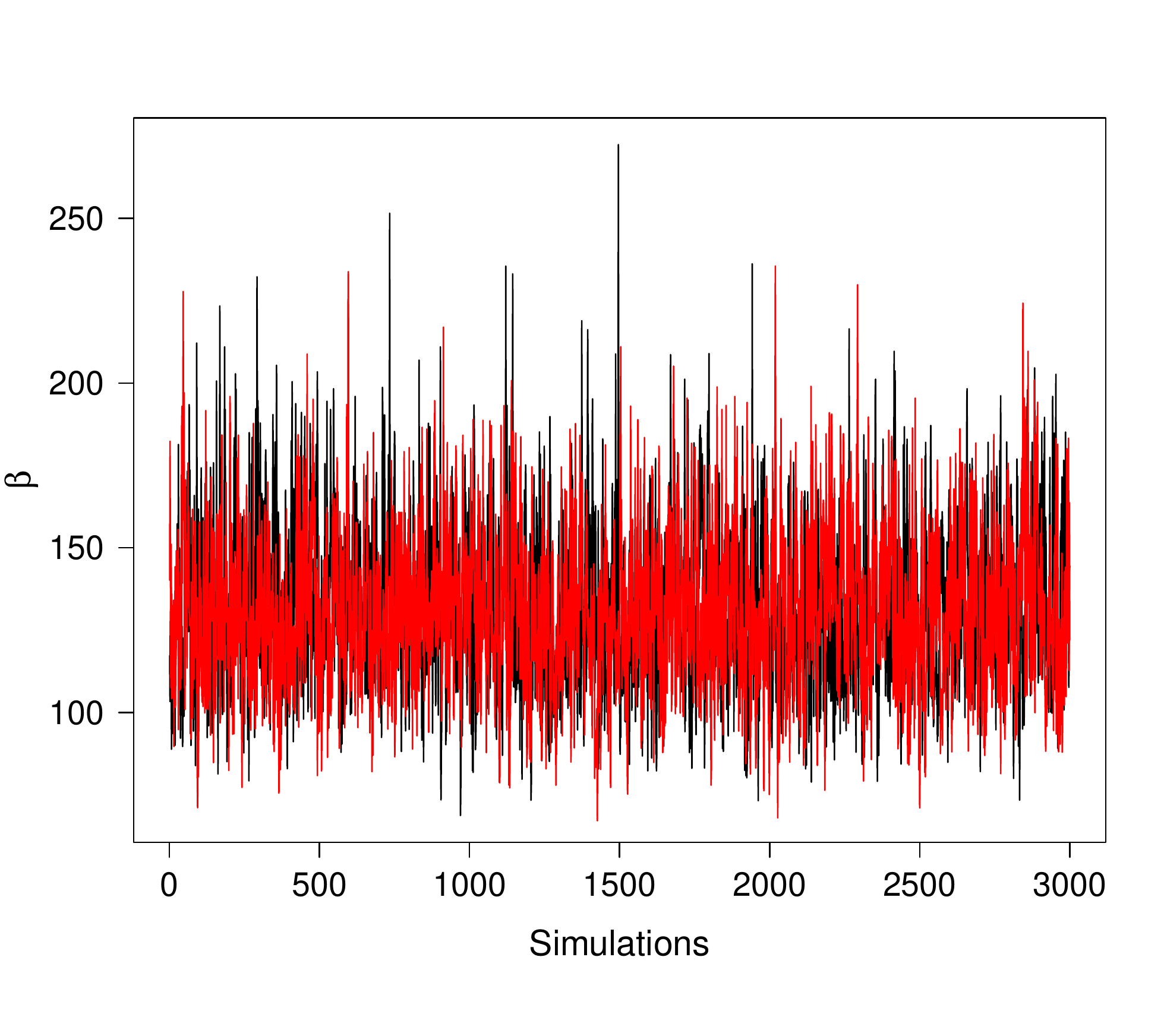}} \quad
\subfloat{\includegraphics[width=0.4\textwidth]{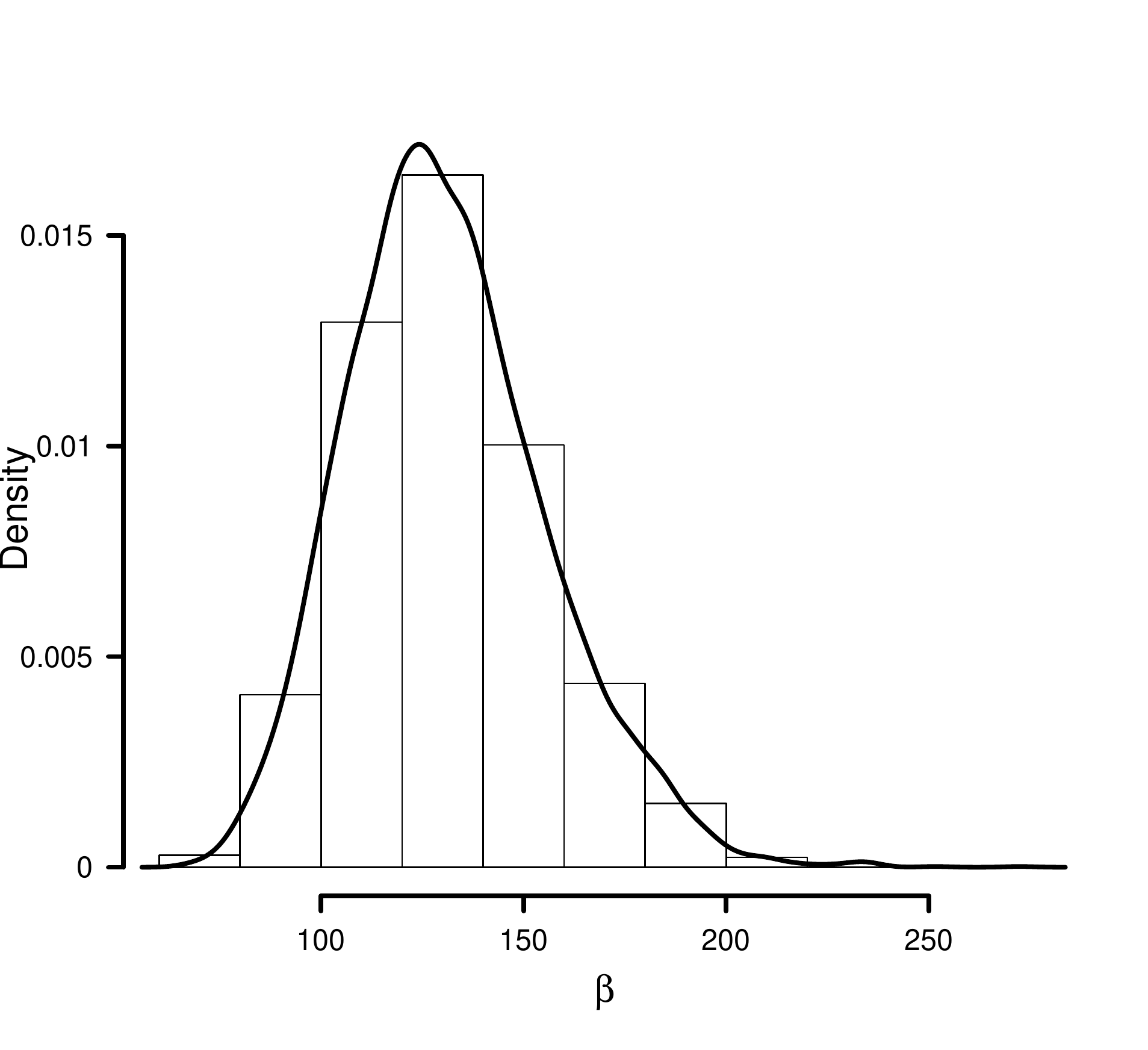}}
\caption{Trace and density of $\alpha$ (upper panels) and $\beta$ (lower panels) using Jeffreys prior.}
\label{fig1-res}
\end{figure}

\begin{figure}[h]
\centering
\subfloat{\includegraphics[width=0.4\textwidth]{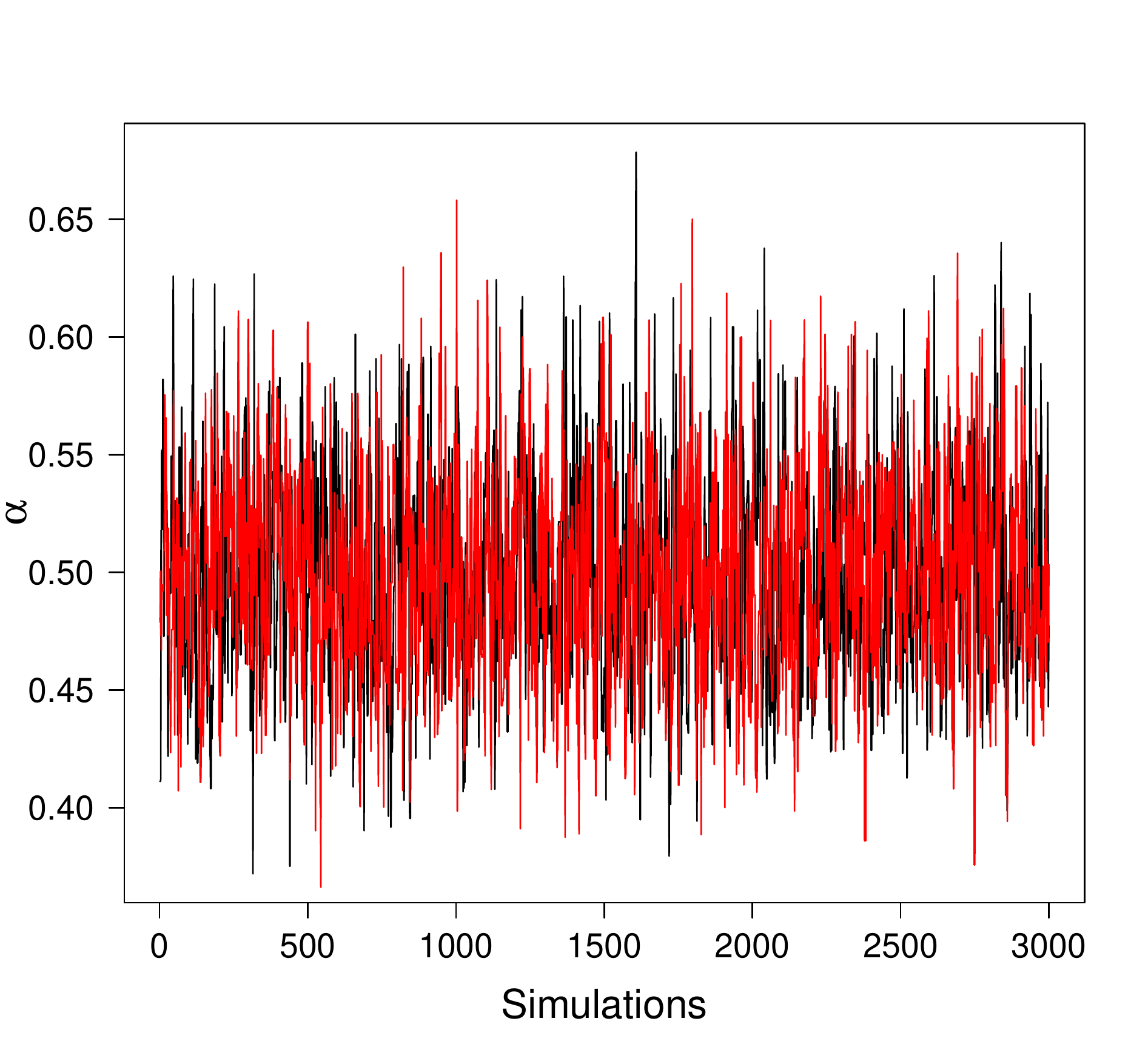}} \quad
\subfloat{\includegraphics[width=0.4\textwidth]{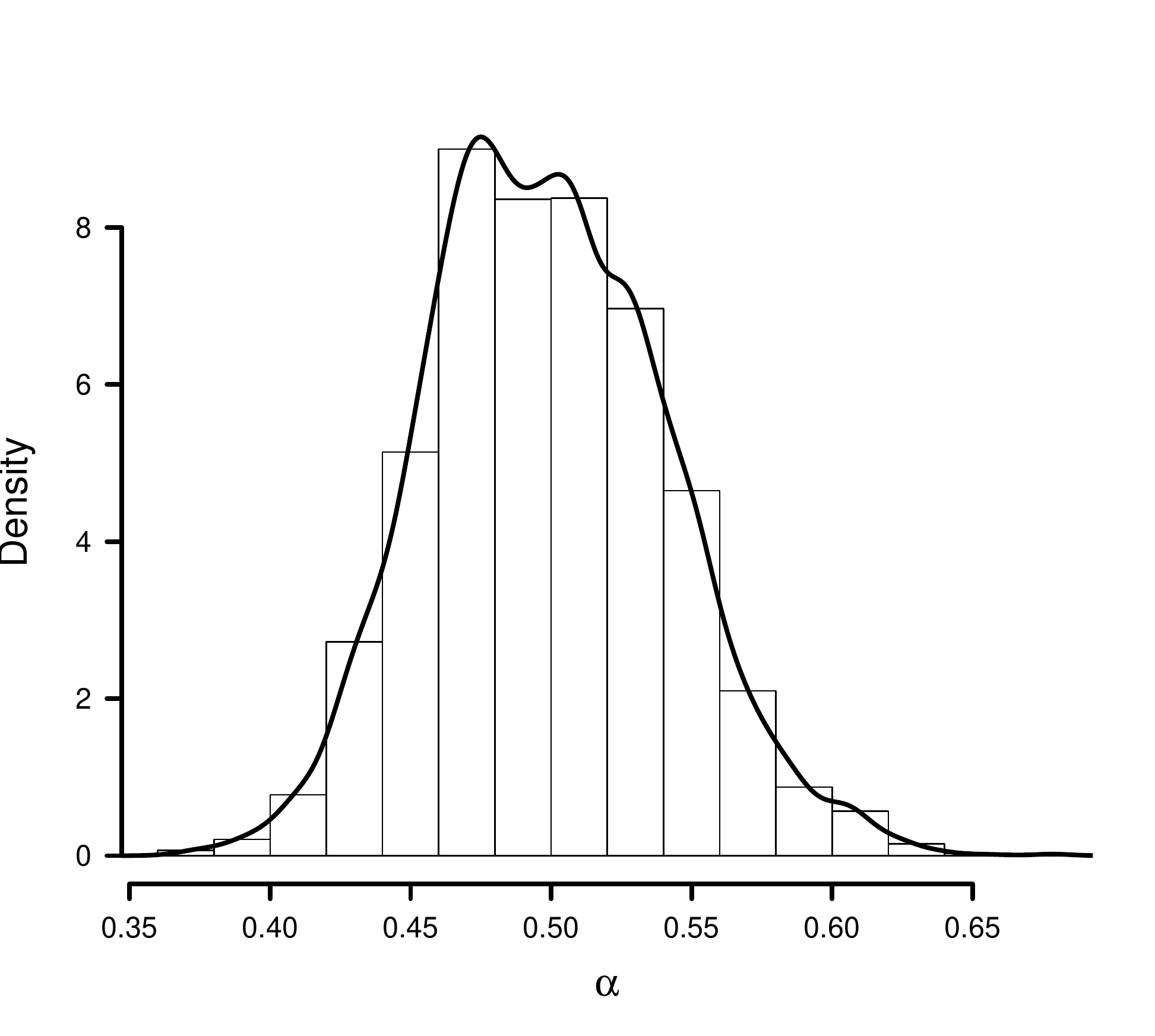}}
\qquad
\subfloat{\includegraphics[width=0.4\textwidth]{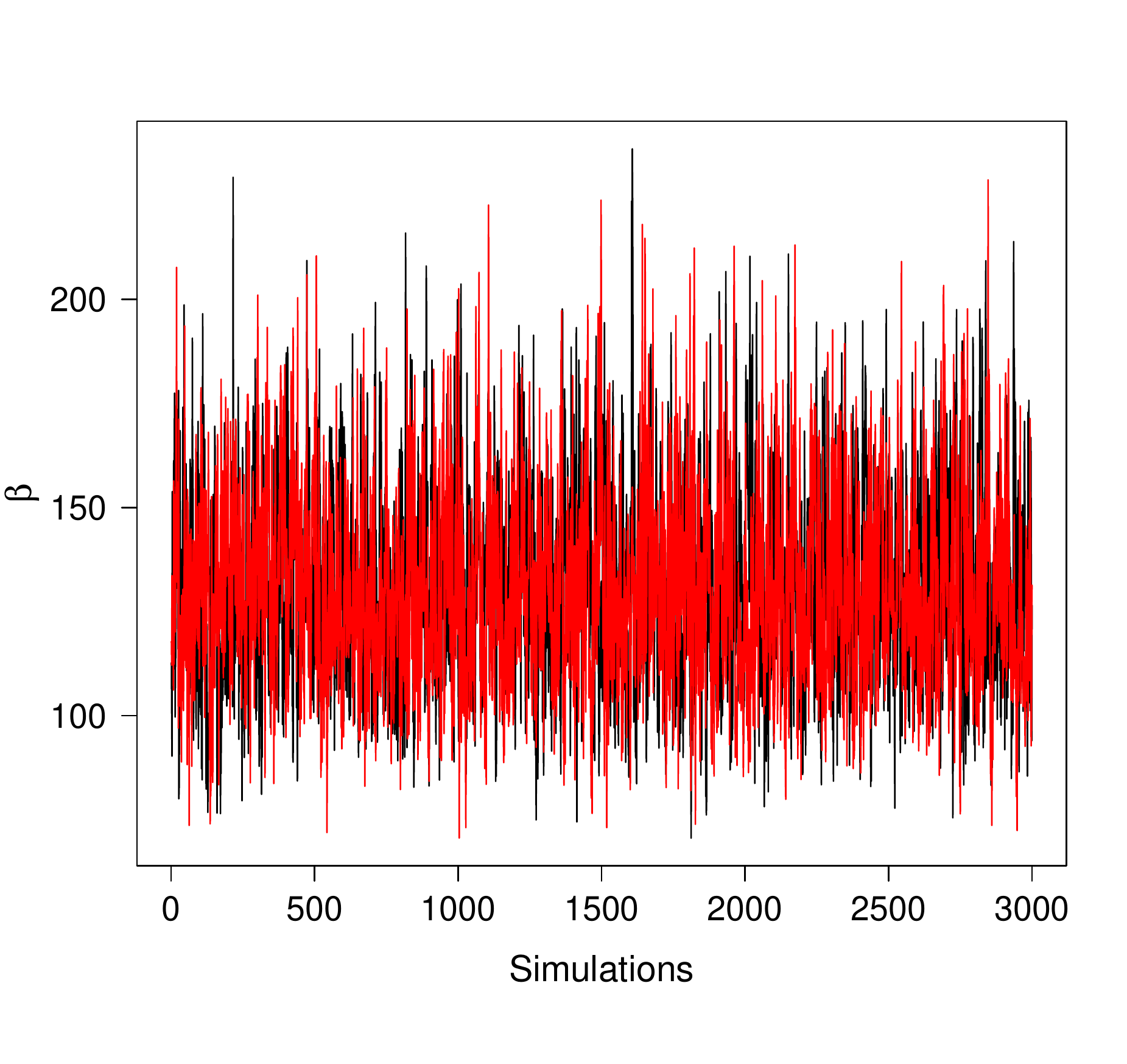}} \quad
\subfloat{\includegraphics[width=0.4\textwidth]{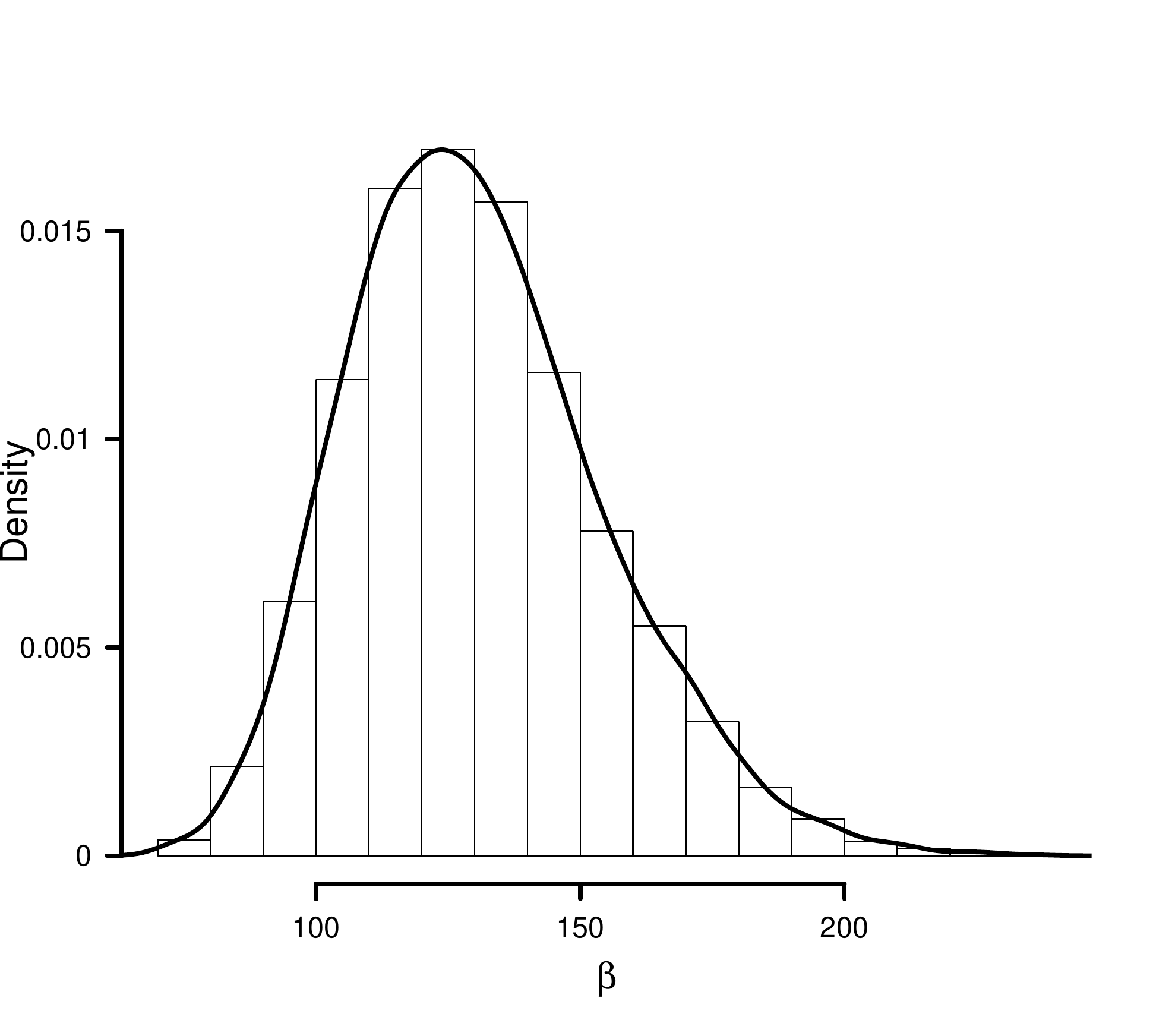}}
\caption{Trace and density of $\alpha$ (upper panels) and $\beta$ (lower panels) using reference prior.}
\label{fig2-res} 
\end{figure}

\section{Concluding remarks} \label{conclusions}

In this paper we evaluated the Bayesian method to estimate the parameters in a Lomax distribution under two non-informative prior
specifications. We showed that the joint posterior distribution is proper no matter which non-informative prior is used. 
We also obtained a scale mixture representation of the Lomax distribution in which the complete conditional distribution of the scale parameter is of known closed form and easy to sample. As a by product, this representation allows for the mixing parameters to
be used to identify possible outliers. 
Overall, the results obtained indicate that the Bayesian method estimates the parameters well under both prior specifications if the
sample size is not too small. Of course, as in any Monte Carlo study, our results are limited to our particular selection
of sample sizes and prior distributions. We hope that our findings are useful to the practitioners.

\section*{Acknowledgements}
The research of Francisco Louzada is funded by the Brazilian organization CNPq.

\clearpage
\appendix

\section{Appendix A}\label{A}

\subsection{Verifying that the posterior is proper under Jeffreys prior}

Under Jeffreys prior, the joint posterior density of $\beta$ and $\alpha$ is given by
$$
\pi(\beta,\alpha|\textbf{x})\propto
\frac{\alpha^{n-1/2} \beta^{-(n+1)}}{(\alpha+1)(\alpha+2)^{1/2}}
\prod_{i=1}^{n} \left(1 + \frac{x_i}{\beta}\right)^{-(\alpha+1)}.
$$
We next show that the integral of this expression is finite for any sample size $n$.

\begin{itemize}
\item [(a)] Verifying for $n=1$:
\begin{eqnarray*}
&&
\int_{0}^{\infty} \int_{0}^{\infty}
\frac{\alpha^{1/2} \beta^{-2}}{(\alpha+1)(\alpha+2)^{1/2}} \left(1 +
\frac{x}{\beta}\right)^{-(\alpha+1)} d\beta d\alpha \nonumber\\
&=& 
\int_{0}^{\infty} \frac{\alpha^{1/2}}{(\alpha+1)(\alpha+2)^{1/2}}
\left(\int_{0}^{\infty}\beta^{-2}\left(1 +
\frac{x}{\beta}\right)^{-(\alpha+1)}d\beta\right)
d\alpha \nonumber\\ 
&=& \int_{0}^{\infty} \frac{\alpha^{1/2}}{(\alpha+1)(\alpha+2)^{1/2}}
\left(\int_{0}^{\infty}\beta^{-2}\left(
\frac{\beta+x}{\beta}\right)^{-(\alpha+1)}d\beta\right)
d\alpha \nonumber\\
&=& \int_{0}^{\infty} \frac{\alpha^{1/2}}{(\alpha+1)(\alpha+2)^{1/2}}
  \left(\int_{0}^{\infty}\beta^{-2}\beta^{\alpha+1}\left(\beta+x\right)^{-(\alpha+1)}d\beta\right)
  d\alpha\\ 
&=& \int_{0}^{\infty} \frac{\alpha^{1/2}}{(\alpha+1)(\alpha+2)^{1/2}}
  \left(\int_{0}^{\infty}\beta^{\alpha-1}\left(\beta+x\right)^{-(\alpha+1)}d\beta\right)
  d\alpha \nonumber\\ 
&=& \int_{0}^{\infty} \frac{\alpha^{1/2}}{(\alpha+1)(\alpha+2)^{1/2}} \frac{1}{x\alpha} d\alpha \nonumber\\
&=& \frac{1}{x}\int_{0}^{\infty} \frac{\alpha^{-1/2}}{(\alpha+1)(\alpha+2)^{1/2}} d\alpha ~ = ~\frac{\pi}{2x} ~ < ~ \infty.
\end{eqnarray*}
		
\item [(b)] Verifying for $n>1$: 
First, we solve
\begin{eqnarray}\label{beta}
\int_{0}^{\infty} \beta^{-(n+1)}\prod_{i=2}^{n} \left(1 + \frac{x_i}{\beta}\right)^{-(\alpha+1)} d\beta.
\end{eqnarray}
Consider $y = \min(x_2, \dots, x_n)$. Then, it follows that
\begin{eqnarray*}	
\left(1 + \frac{x_i}{\beta}\right)^{\alpha+1}&\geq& 
\left(1 + \frac{y}{\beta}\right)^{\alpha+1}, \quad 
\alpha>0, \quad i=2,\dots,n\\\\ 
\prod_{i=2}^{n}\left(1 + \frac{x_i}{\beta}\right)^{\alpha+1}&\geq&
\left(1 + \frac{y}{\beta}\right)^{(n-1)(\alpha+1)}\\\\ 
\prod_{i=2}^{n}\left(1 + \frac{x_i}{\beta}\right)^{-(\alpha+1)}&<&
\left(1 + \frac{y}{\beta}\right)^{-(n-1)(\alpha+1)}. 
\end{eqnarray*}
Therefore,
\begin{eqnarray*}
&& \int_{0}^{\infty} \beta^{-(n+1)}\prod_{i=2}^{n} \left(1 +
\frac{x_i}{\beta}\right)^{-(\alpha+1)} d\beta ~ < ~
\int_{0}^{\infty} \beta^{-(n+1)} \left(1 +
\frac{y}{\beta}\right)^{-(n-1)(\alpha+1)} d\beta \\\\ 
&=&
\int_{0}^{\infty} \beta^{-(n+1)} \left(\frac{\beta+y}{\beta}\right)^{-(n-1)(\alpha+1)} d\beta \\\\
&=& 
\int_{0}^{\infty} 
\beta^{-(n+1)}\beta^{n\alpha+n-\alpha-1} \left(\beta+y\right)^{-(n-1)(\alpha+1)} d\beta\\\\
&=& 
\int_{0}^{\infty} 
\beta^{n\alpha-\alpha-2}\left(\beta+y\right)^{-(n-1)(\alpha+1)} d\beta 
~ = ~ \frac{(n-1)!\Gamma(n\alpha-\alpha-1)}{y^n\Gamma(n\alpha-\alpha+n-1)}.
\end{eqnarray*}

Then,
\begin{eqnarray} \label{pr1}
&&
\int_{0}^{\infty} \int_{0}^{\infty} 
\frac{\alpha^{n-1/2} \beta^{-(n+1)}}{(\alpha+1)(\alpha+2)^{1/2}}
\prod_{i=2}^{n} \left(1 + \frac{x_i}{\beta}\right)^{-(\alpha+1)}
d\beta d\alpha\nonumber\\
&=&
\int_{0}^{\infty}
\frac{\alpha^{n-1/2}}{(\alpha+1)(\alpha+2)^{1/2}}
\frac{(n-1)!\Gamma(n\alpha-\alpha-1)}{y^n\Gamma(n\alpha-\alpha+n-1)}d\alpha\nonumber\\
&=& 
\frac{(n-1)!}{y^n}\int_{0}^{\infty}  
\frac{\alpha^{n-1/2}}{(\alpha+1)(\alpha+2)^{1/2}}
\frac{\Gamma(n\alpha-\alpha-1)}{\Gamma(n\alpha-\alpha+n-1)}d\alpha\nonumber\nonumber\nonumber\\
&=& 
\frac{(n-1)!}{y^n}\int_{0}^{\infty} 
\frac{\alpha^{n-1/2}}{(\alpha+1)(\alpha+2)^{1/2}}\frac{1}{\prod_{j=-1}^{n-2}(n\alpha-\alpha+j)}d\alpha.
\end{eqnarray}

But note that
$(n\alpha-\alpha+j)\geq(n\alpha-\alpha)$, $\alpha>0$, $j\geq-1$.
Therefore,
\begin{eqnarray*}
(n\alpha-\alpha+j)^{-1} &<& (n\alpha-\alpha)^{-1}\\\\
\prod_{j=-1}^{n-2}(n\alpha-\alpha+j)^{-1} &<& \prod_{j=-1}^{n-2}(n\alpha-\alpha)^{-1}\\\\
\prod_{j=-1}^{n-2}(n\alpha-\alpha+j)^{-1} &<& (n\alpha-\alpha)^{-n+2}
\end{eqnarray*}

and replacing in (\ref{pr1}), we have that
$$
\frac{(n-1)!}{y^n}\int_{0}^{\infty} 
\frac{\alpha^{n-1/2}}{(\alpha+1)(\alpha+2)^{1/2}}\frac{1}{\prod_{j=-1}^{n-2}(n\alpha-\alpha+j)}d\alpha
$$
$$
<\frac{(n-1)!}{y^n}\int_{0}^{\infty} 
\frac{\alpha^{n-1/2}}{(\alpha+1)(\alpha+2)^{1/2}}(n\alpha-\alpha)^{-n+2}d\alpha ~ = ~ \infty.
$$
\end{itemize}
We can conclude that the posterior distribution using Jeffreys prior is proper for $n\ge 1$.

\subsection{Verifying that the posterior is proper under reference prior}

Using a reference prior, the joint posterior density of $\beta$ and $\alpha$ is given by
$$
\pi(\beta,\alpha|\textbf{x})\propto
\alpha^{n-1}  \beta^{-(n+1)}\prod_{i=1}^{n} \left(1 + \frac{x_i}{\beta}\right)^{-(\alpha+1)}.
$$
We next verify whether the integral of this expression is finite.

\begin{itemize}
\item [a)] Verifying for $n = 1$:
\begin{eqnarray*}
\int_{0}^{\infty} \int_{0}^{\infty} \beta^{-2}\left(1 +
\frac{x}{\beta}\right)^{-(\alpha+1)} d\beta d\alpha &=& \int_{0}^{\infty} \int_{0}^{\infty}
\beta^{-2}\left(\frac{\beta+x}{\beta}\right)^{-(\alpha+1)}d\beta d\alpha\\\\
&=& \int_{0}^{\infty} \int_{0}^{\infty}
\beta^{-2}\beta^{\alpha+1}\left(\beta+x\right)^{-(\alpha+1)}d\beta d\alpha\\\\	
&=& \int_{0}^{\infty} \int_{0}^{\infty}
\beta^{\alpha-1}\left(\beta+x\right)^{-(\alpha+1)}d\beta d\alpha\\\\
&=& \int_{0}^{\infty} \frac{1}{x\alpha} d\alpha ~ = ~ \infty.
\end{eqnarray*}
Therefore, the posterior distribution using a reference prior is improper for $n=1$.
	
\item [b)] Verifying for $n > 1$: Note that
$$
\int_{0}^{\infty} \beta^{-(n+1)}\prod_{i=2}^{n} \left(1 + \frac{x_i}{\beta}\right)^{-(\alpha+1)} d\beta
$$
is the same integral (\ref{beta}) previously resolved. Then, 
\begin{eqnarray*}
&&\int_{0}^{\infty} \int_{0}^{\infty}
\alpha^{n-1} \beta^{-(n+1)}\prod_{i=2}^{n} \left(1 +
\frac{x_i}{\beta}\right)^{-(\alpha+1)} d\beta d\alpha\\
&=&
\frac{(n-1)!}{y^n}\int_{0}^{\infty} \alpha^{n-1}\frac{1}{\prod_{j=-1}^{n-2}(n\alpha-\alpha+j)}d\alpha\\
&<&
\frac{(n-1)!}{y^n}\int_{0}^{\infty} \alpha^{n-1}(n\alpha-\alpha)^{-n+2}d\alpha ~ = ~ \infty.
\end{eqnarray*}
Thus, the posterior distribution using reference prior is proper for $n>1$.

\end{itemize}

\clearpage
\section{{\tt R} code of the Metropolis-Hasting within Gibbs} \label{B}

\begin{lstlisting}
#==============================================================#
### clean memory
rm(list=ls(all=T))
graphics.off()
#==============================================================#
dat=read.table("dataset1.txt",header=T,sep=",",quote="\"",dec=".",
    fill=T,na.strings="NA")
n_dat=nrow(dat)
#==============================================================#
### Jeffreys (independence) and reference priors
## $p(\alpha,\beta) \propto \frac{1}{\alpha \beta}$
# use of the log of the conditional distribution of alpha
alpha_cond=function(x,lambda,tuning){
	repeat{
		# xn: proposal value
		xn=x+rnorm(1,0,tuning)
		if(xn>0)
		break
	}
	A1=length(lambda)*(lgamma(x)-lgamma(xn))+(xn-x)*
	   sum(log(lambda))+log(x)-log(xn)
	A2=pnorm(x,log.p=T)-pnorm(xn,log.p=T)
	A=A1+A2
	if(log(runif(1))<=A) x=xn else x=x
	return(x)
}
#==============================================================#
# some variable declaration
burn=20000
jump=20
n.amostra=3000
iter=burn+jump*n.amostra
#
ref_full=list(alpha1=numeric(),alpha2=numeric(),beta1=numeric(),
    beta2=numeric())
#
#==============================================================#
# MCMC method: Gibbs sampler with Metropolis-Hastings
# initial values
alpha_iter=rgamma(1,1,1)
beta_iter=rgamma(1,1,1)
#
tuning=3/3
#
for(j1 in 1:iter){
	# auxiliary variable
	lambda_aux=unlist(lapply(1:n_dat,function(i) rgamma(1,
	    shape=(alpha_iter+1),rate=((dat[i,1]/beta_iter)+1))))
	# conditional distribution of beta
	beta_iter=1/rgamma(1,shape=n_dat,scale=1/sum(lambda_aux*dat))
	# generation of values of alpha from its conditional
	# distribution
	alphan=alpha_cond(alpha_iter,lambda_aux,tuning)
	#
	alpha_iter=alphan
	#
	ref_full$alpha1[j1]=alpha_iter
	ref_full$beta1[j1]=beta_iter
}
#==============================================================#
\end{lstlisting}

\clearpage

\bibliographystyle{chicago}

\begin{thebibliography}{10}
\providecommand{\url}[1]{{#1}}
\providecommand{\urlprefix}{URL }
\expandafter\ifx\csname urlstyle\endcsname\relax
  \providecommand{\doi}[1]{DOI~\discretionary{}{}{}#1}\else
  \providecommand{\doi}{DOI~\discretionary{}{}{}\begingroup
  \urlstyle{rm}\Url}\fi

\bibitem{bain92}
Bain, L., Engelhardt, M.: Introduction to Probability and Mathematical
  Statistics.
\newblock Duxbury Press (1992)

\bibitem{berger89}
Berger, J.O., Bernardo, J.M.: {Estimating a Product of Means: Bayesian Analysis
  with Reference Priors}.
\newblock Journal of the American Statistical Association \textbf{84}(405),
  200--207 (1989)

\bibitem{berger92a}
Berger, J.O., Bernardo, J.M.: {On the development of reference priors}, pp.
  35--60.
\newblock Oxford University Press (1992)

\bibitem{berger92b}
Berger, J.O., Bernardo, J.M.: {Ordered group reference priors with applications
  to a multinomial problem}.
\newblock Biometrika \textbf{79}, 25--37 (1992)

\bibitem{berger2009b}
Berger, J.O., Bernardo, J.M., Sun, D.: The formal definition of reference
  priors.
\newblock The Annals of Statistics \textbf{37}(2), 905--938 (2009)

\bibitem{berger2009a}
Berger, J.O., Bernardo, J.M., Sun, D.: Natural induction: An objective bayesian
  approach.
\newblock RACSAM - Revista de la Real Academia de Ciencias Exactas, Fisicas y
  Naturales, Serie A, Matematicas \textbf{103}(1), 125--135 (2009)

\bibitem{bernardo79}
Bernardo, J.: {Reference posterior distributions for Bayesian inference (with
  discussion)}.
\newblock Journal of the Royal Statistical Society, Series B \textbf{41},
  113--147 (1979)

\bibitem{bernardo05}
Bernardo, J.M.: {Reference Analysis}, vol.~25, pp. 17--90.
\newblock Elsevier (2005)

\bibitem{bryson74}
Bryson, M.: {Heavy-tailed distributions: Properties and tests.}
\newblock Technometrics \textbf{16}, 61--68 (1974)

\bibitem{Chib95}
Chib, S., Greenberg, E.: {Understanding the Metropolis-Hasting Algorithm}.
\newblock American Statistician \textbf{49}(4), 327--335 (1995)

\bibitem{clarke1997}
Clarke, B., Sun, D.: Reference priors under the chi-squared distance.
\newblock Sankhya: The Indian Journal of Statistics, Series A (1961-2002)
  \textbf{59}(2), 215--231 (1997)

\bibitem{gr92}
Gelman, A., Rubin, D.B.: Inference from iterative simulation using multiple
  sequences.
\newblock Statistical Science \textbf{7}(4), 457--472 (1992)

\bibitem{Hastings70}
Hastings, W.: {Monte Carlo Sampling Methods Using Markov Chains and Their
  Applications}.
\newblock Biometrika \textbf{57}, 97--109 (1970)

\bibitem{holland2006}
Holland, O., Golaup, A., Aghvami: {Traffic characteristics of aggregated module
  downloads for mobile terminal reconfiguration}.
\newblock Biometrika \textbf{135}, 683--690 (2006)

\bibitem{Howlader2002}
Howlader, H., Hossain, A.M.: {Bayesian survival estimation of Pareto
  distribution of the second kind based on failure-censored data.}
\newblock Computational Statistics \& Data Analysis \textbf{38}, 301--314
  (2002)

\bibitem{jeffreys61}
Jeffreys, H.: {Theory of Probability}.
\newblock Oxford Univ. Press, Oxford (1961)

\bibitem{lomax54}
Lomax, K.: {Business Failures; Another example of the analysis of failure
  data}.
\newblock Journal of the American Statistical Association - JSTOR \textbf{49},
  847--852 (1954)

\bibitem{lunn2000}
Lunn, D.J., Thomas, A., Best, N., Spiegelhalter, D.: Win\textsc{BUGS} - a
  \textsc{B}ayesian modelling framework: Concepts, structure, and
  extensibility.
\newblock Statistics and Computing \textbf{10}(4), 325--337 (2000)

\bibitem{plummer2003}
Plummer, M.: {JAGS}: A program for analysis of {Bayesian} graphical models
  using {Gibbs} sampling.
\newblock In: Proceedings of the 3rd International Workshop on Distributed
  Statistical Computing (2003)

\bibitem{coda2006}
Plummer, M., Best, N., Cowles, K., Vines, K.: {CODA: Convergence Diagnosis and
  Output Analysis for MCMC}.
\newblock R News \textbf{6}(1), 7--11 (2006).

\bibitem{r2014}
{R Core Team}: R: A Language and Environment for Statistical Computing.
\newblock R Foundation for Statistical Computing, Vienna, Austria (2014).

\bibitem{van2009}
Van, H.M., Vose, D.: {A Compendium of Distributions - ebook}  (2009)

\end{thebibliography}

\end{document}